\newtheorem{definition}{\emph{\underline{Definition}}}
\newtheorem{lemma}{\emph{\underline{Lemma}}}
\newtheorem{proposition}{\emph{\underline{Proposition}}}
\newtheorem{example}{\bf Example}
\newtheorem{remark}{\bf \emph{\underline{Remark}}}
\def\sinc{\mathrm{sinc}}
\def\({\left(}
\def\){\right)}
\def\b0{{\mathbf{0}}}
\newcommand{\nn}{\nonumber}
\begin{document}
\captionsetup[figure]{name={Fig.},labelsep=period,singlelinecheck=off}  
\title{\huge 
Performance Analysis and Low-Complexity Beamforming Design for Near-Field Physical Layer Security\vspace{-8pt}}
\author{Yunpu~Zhang,~\IEEEmembership{Graduate Student Member,~IEEE}, Yuan~Fang,~\IEEEmembership{Member,~IEEE},
Changsheng~You,~\IEEEmembership{Member,~IEEE}, Ying-Jun Angela Zhang,~\IEEEmembership{Fellow,~IEEE}, and Hing Cheung So,~\IEEEmembership{Fellow,~IEEE}\vspace{-0.5em}
\thanks{
Y. Zhang, Y. Fang and H. C. So are with the Department of Electrical Engineering, City University of Hong Kong, Hong Kong (e-mail:
yunpu.zhang@my.cityu.edu.hk, yuanfang@cityu.edu.hk, hcso@ee.cityu.edu.hk). C. You is with the Department of Electronic and Electrical Engineering, Southern University of Science and Technology (SUSTech), Shenzhen
518055, China (e-mail: youcs@sustech.edu.cn). Y.-J. A. Zhang is with the Department of Information Engineering, The Chinese University of Hong Kong, Hong Kong (e-mail: yjzhang@ie.cuhk.edu.hk).}  
\vspace{-18pt}} 

\maketitle
	\begin{abstract}
	Extremely large-scale arrays (XL-arrays) have emerged as a key enabler in achieving the unprecedented performance requirements of future wireless networks, leading to a significant increase in the range of the near-field region. This transition necessitates the spherical wavefront model for characterizing the wireless propagation rather than the far-field planar counterpart, thereby introducing extra degrees-of-freedom (DoFs) to wireless system design. In this paper, we explore the beam focusing-based physical layer security (PLS) in the near field, where multiple legitimate users and one eavesdropper are situated in the near-field region of the XL-array base station (BS). First, we consider a special case with one legitimate user and one eavesdropper to shed useful insights into near-field PLS. In particular, it is shown that 
 1) Artificial noise (AN) is crucial to near-field \emph{security provisioning}, transforming an insecure system to a secure one; 2) AN can yield numerous \emph{security gains}, which considerably enhances PLS in the near field as compared to the case without AN taken into account.  Next, for the general case with multiple legitimate users, we propose an efficient low-complexity approach to design the beamforming with AN to guarantee near-field secure transmission. Specifically, the low-complexity approach is conceived starting by introducing the concept of \emph{interference domain} to capture the inter-user interference level, followed by a \emph{three-step identification framework} for designing the beamforming. Finally, numerical results reveal that 1) the PLS enhancement in the near field is pronounced thanks to the additional spatial DoFs; 2) the proposed approach can achieve close performance to that of the computationally-extensive conventional method yet with a significantly lower computational complexity.
\end{abstract}
 \begin{IEEEkeywords}
Extremely large-scale array (XL-array), near-field communications, physical layer security (PLS). 
\end{IEEEkeywords}
\vspace{-10pt}
	\section{Introduction}
To fulfill the growing demands for applications like extended reality, metaverse, and holographic telepresence, the future sixth-generation (6G) wireless networks are anticipated to deliver significantly enhanced performance requirements than those of fifth-generation (5G) \cite{8869705,you2024next}. Among others, by drastically boosting the number of antennas deployed at the base station (BS),  the extremely
large-scale array (XL-array) has been envisioned as a key enabler to significantly improve the spectral efficiency and spatial resolution of future wireless communication systems \cite{liu2023near,9903389,you2023near}.
In particular, the huge aperture of XL-array introduces a fundamental change in the electromagnetic channel modeling, leading to a paradigm shift from traditional far-field communications towards new \emph{near-field communications} \cite{9903389}. 
More specifically, the electromagnetic waves in the near-field region should be precisely modeled by the spherical wavefront, which differs from the planar one typically assumed in the far-field counterpart. 

This unique characteristic facilitates
 the beam energy to be focused at a specific location/region, termed \emph{beam focusing} \cite{9738442}, bringing new design opportunities in various wireless systems \cite{9617121,wang2023beamfocusing,LDMA,zhang2023swipt}. 
 For example, for the typical multi-user communication systems, the authors of \cite{9617121} showed that using maximum ratio transmission (MRT)-based transmit beamforming can lead to near-optimal performance when the number of antennas is sufficiently large, thanks to the beam focusing gain. Later on, the beam focusing property was further investigated in near-field wideband multi-user communication systems for alleviating the spatial wideband effect \cite{wang2023beamfocusing}. In \cite{LDMA}, a novel location division multiple access capitalizing on the beam focusing property was proposed to serve multiple near-field users simultaneously to enhance the spectrum efficiency. 
 Meanwhile, 
 the beam focusing characteristic 
stimulates plentiful
 novel applications, especially in the realm of sensing/localization-related scenarios. For instance, power in near-field simultaneous wireless information
and power transfer (SWIPT) systems can be intentionally concentrated on the power-hungry receivers with very limited energy leakage rather than being dispersed along a specific direction as in the far-field SWIPT \cite{zhang2023swipt}. 
 Besides, the authors of \cite{10520715} developed a near-field integrated sensing and communications framework for multi-target detection. Despite the prominent merits of the beam focusing characteristic in various communication scenarios, security has been a long-standing concern in wireless communication systems, which deserves more in-depth research in the new near-field communications.

Physical layer security (PLS) is an effective complement to cryptography for safeguarding system security  \cite{shiu2011physical}, making use of the physical attributes of
wireless channels, such as interference, fading, noise, and disparity, while avoiding the complexity of generating and managing secret keys. 
There have been some initial attempts made to investigate PLS in the near-field communications \cite{zhang2023near,10480457,ferreira2023physical,9860861,yu2022physical,zhang2023physical}. The work in \cite{zhang2023near} and \cite{10480457} explored analog beam focusing for near-field PLS enhancement, considering the hardware costs associated with XL-arrays.
Specifically,
in \cite{zhang2023near}, the PLS improvement in near-field wideband communications was investigated, wherein the interplay between near-field
propagation and wideband beam split were addressed by utilizing the true-time delayer (TTD)-based analog beam focusing. The authors of \cite{10480457} proposed an analog directional modulation precoding algorithm to offer secure transmissions at both angular and distance scales for a single-user multiple-input single-output (MISO) system. Further studies have demonstrated various benefits of beam focusing in near-field secure transmission. In particular, it was revealed in \cite{ferreira2023physical} that the beam focusing effect can be utilized to significantly improve the jamming rejection and the secrecy performance in near-field communications. The authors of \cite{9860861} showed through numerical results that utilizing the new spatial degree-of-freedom (DoF) under spherical-wave propagation can enhance the secrecy rate and reduce the secure-blind areas. In \cite{yu2022physical}, the authors studied PLS in the spherical-wave channel model, in which a secure transmission scheme was suggested to combat the cooperative eavesdropping for achieving distance-domain security. 
Moreover, the authors of \cite{zhang2023physical} introduced a near-field secure transmission framework using the classical hybrid beamforming architecture, revealing that the secrecy transmission of near-field communications is mainly determined by the relative distance of the eavesdropper with respect to the legitimate user. 


However, most of the existing works either numerically demonstrated the performance improvement by exploiting the beam focusing effect or directly applied conventional far-field designs to near-field secure transmission. This thus fails to fully unveil the potential of near-field PLS, highlighting the need for more in-depth research on the sophisticated secure transmission design. For example, artificial noise (AN) has been extensively utilized in conventional far-field secure communications for PLS enhancement, which is designed deliberately to impair the channel of the eavesdropper while at the same time having a limited effect on the signal-to-interference-plus-noise ratio (SINR) at the legitimate users \cite{9133130,9024490}. However, it still remains unknown whether AN is advantageous to the near-field beam focusing-based PLS, and the analytical secrecy performance characterizations of near-field PLS remain lacking. More specifically, the following aspects are worthy of investigation: 1) whether the AN can utilize the beam focusing property embedded in near-field spherical wavefront; 2) under what conditions is AN most advantageous.
Moreover, although the beam focusing function is appealing for its flexibility in controlling the beam energy distribution, its practical realization is typically accompanied by prohibitively high computational complexity. For example, to fully exploit the potential of beam focusing, semidefinite programming is commonly adopted for designing the beamformer, whose computational complexity is on the order of $\mathcal{O}(N^{6.5})$ \cite{10520715} with $N$ denoting the number of antennas. This is far from practical, especially in the XL-array systems. Thus, it necessitates computationally-efficient designs tailored for near-field PLS. To the best of our knowledge, the above two aspects have not yet been studied
in the literature.

 Motivated by the above, in this paper, we consider the security provisioning for a near-field communication system, where the BS equipped with an XL-array transmits confidential information to multiple near-field legitimate users in the presence of one near-field eavesdropper. Specifically, we delve into a challenging scenario in which the eavesdropper possesses the most favorable channel condition, under which joint beamforming with AN for achievable secrecy rate maximization is studied. The main contributions are summarized as follows.
 \begin{itemize}
    \item First, we investigate secrecy provisioning in a near-field multi-user communication system using the beam focusing effect, where the achievable secrecy rate is maximized by jointly optimizing the transmit beamforming with AN subject to the transmit power constraint at the BS.
	\item Second, to the best of our knowledge, we are the first to analytically investigate the effect of AN on near-field PLS. Specifically, to shed important insights into the new characteristics of near-field PLS, we consider a special case with one legitimate user and one eavesdropper involved. In particular, it is shown that incorporating AN into the beam focusing-based PLS can bring two prominent benefits. On the one hand, AN is essential to \emph{security provisioning}, being capable of transforming an insecure system into a secure one. On the other hand, we reveal an interesting fact that allocating only a small proportion of power to AN can lead to significant \emph{security gain enhancement} compared to the case without AN taken into account. 
    \item Third, for the general case with multiple legitimate users, we further develop an efficient yet low-complexity approach compared to the classical approach capitalizing on the semidefinite relaxation (SDR) and successive convex approximation (SCA).
    Specifically, the proposed approach starts with defining an \emph{interference domain} to conceptualize the interference level among users, followed by a customized \emph{three-step identification framework} for determining the beamforming optimization.
 	\item Finally, numerical results are provided to verify the performance gains attained by near-field secure transmission over the conventional far-field secure communication system, as well as the effectiveness of the proposed low-complexity approach. In particular, it is demonstrated that 1) the AN can make good use of the beam focusing effect; 2) the extra spatial resources in the distance domain are highly beneficial to near-field PLS improvement; 
    3) our approach achieves very close secrecy performance to that of the conventional approach but with much lower computational cost.
 \end{itemize}
 
 
 \emph{Notations:} We adopt the upper-case
 calligraphic letters, e.g., $\mathcal{K}$ to denote discrete and finite sets. The superscript $(\cdot)^H$ denote the conjugate transpose. $\mathcal{CN}(\mu,\sigma^2)$ represents the complex Gaussian distribution with mean $\mu$ and variance $\sigma^2$. Moreover, $|\cdot|$ denotes the absolute value for a real number and the cardinality for a set. $\mathrm{Tr}(\mathbf{X})$ and $\mathrm{Rank}(\mathbf{X})$ denote the trace and rank of matrix $\mathbf{X}$; $\mathbf{X} \succeq 0$ indicates that $\mathbf{X}$ is a positive semidefinite (PSD) matrix. $[x]^+$ stands for $\max\{0, x\}$. $\mathcal{O}(\cdot)$ denotes the standard big-O notation. 
 \vspace{-6pt}
\section{System Model}\label{Sec:SM}
We consider a near-field multi-user downlink communication system as shown in Fig.~\ref{Fig:systemmodel1}, consisting of an XL-array BS, an eavesdropper, and multiple legitimate users, indexed by $\mathcal{K}=\{1,2,\cdots,K\}$. The BS employs a uniform linear array (ULA) composed of $N$ antenna elements, while the legitimate users and the eavesdropper are single-antenna receivers. We assume that the legitimate users and eavesdropper are located in the near-field region, where the BS-user distances are greater than the Fresnel distance $r_{\rm Fre}=0.62\sqrt{D^3/\lambda}$ and smaller than the Rayleigh distance $r_{\rm Ray}={{2D^2/}{\lambda}}$, with $D$ and $\lambda$ denoting the antenna array aperture and carrier wavelength, respectively. Moreover, perfect channel state information (CSI) of the legitimate users and the eavesdropper is assumed to be available at the BS via effective near-field channel estimation and/or beam training methods, e.g., \cite{liu2024sensing,zhang2022fast,wu2023two}.\footnote{Notice that the acquisition of the CSI of the passive eavesdropper is an important and challenging topic. While obtaining the CSI of the passive eavesdropper is generally difficult, the
results presented in this paper serve as theoretical upper bounds on the performance of the considered system. Furthermore, even for a passive eavesdropper, its CSI can potentially 
be estimated by exploiting the local oscillator power that inadvertently leaks from its receiver RF frontend \cite{6288501}.}  
	
\begin{figure}[t]
	\centering
	\includegraphics[width=0.32\textwidth]{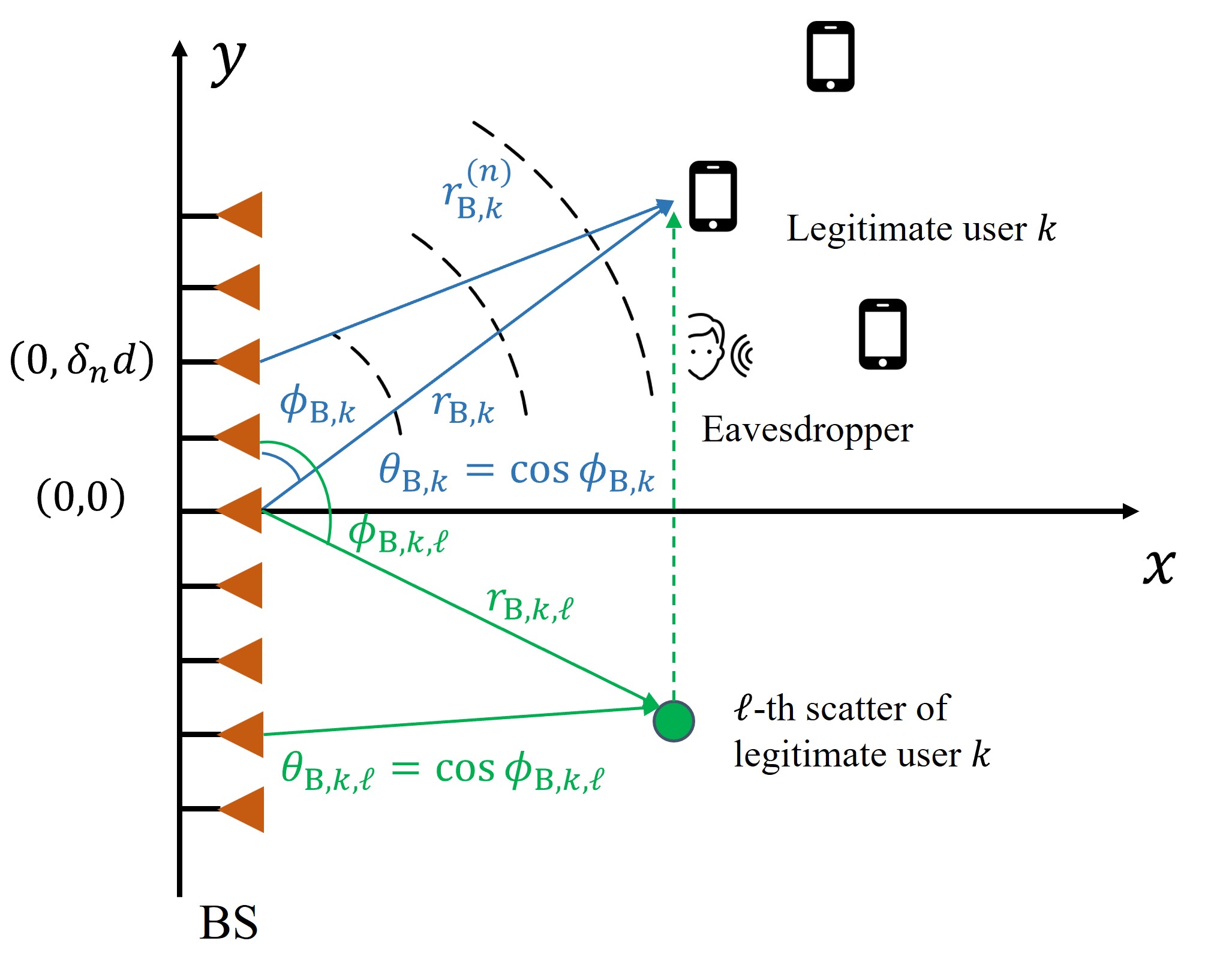}
	\caption{The considered near-field PLS system.} \label{Fig:systemmodel1}
 \vspace{-20pt}
\end{figure}
 \vspace{-12pt}
\subsection{Channel Model}
For a typical near-field user, its channel should be modeled by the more accurate spherical wavefront, which is explained as follows.
First, without loss of generality, we assume that the XL-array is placed along the $y$-axis, centered at the origin of the coordinate system, i.e., $(0,0)$. Accordingly, the coordinate of the $n$-th antenna element, $\forall n\in\{1,\cdots,N\}$, is given by $(0,\delta_nd)$, where $\delta_n=\frac{2n-N-1}{2}$, and $d=\frac{\lambda}{2}$ denotes the antenna spacing. As such, the distance between the $n$-th antenna element of the XL-array and the legitimate user $k$ can be expressed as 
\begin{equation}
	r^{(n)}_{{\rm B},k}=\sqrt{ r^{2}_{{\rm B},k}-2r_{{\rm B},k}\theta_{{\rm B},k}\delta_{n}d+\delta_{n}^2d^2},
    \vspace{-6pt}
\end{equation}
where $r_{{\rm B},k}$ and $\theta_{{\rm B},k}=\cos
\phi_{{\rm B},k}$ denote the propagation distance and spatial angle from the origin to the legitimate user $k$, respectively, with $\phi_{{\rm B},k}$ being the physical angle-of-departure (AoD). In this paper, we adopt the general near-field multipath channel model for each user, consisting of a line-of-sight (LoS) path and several non-line-of-sight (NLoS) paths induced by environment scatterers. Specifically, the LoS component of the near-field  channel between the BS and the legitimate user $k$ can be modeled as 
\begin{equation}
	\mathbf{h}^{H}_{{\rm LoS},{\rm B},k} = \sqrt{N}h_{{\rm B},k}\mathbf{b}^{H}(\theta_{{\rm B},k},r_{{\rm B},k}),
\end{equation}
where $h_{{\rm B},k}=\frac{\lambda}{4\pi r_{{\rm B},k}}e^{-j\frac{2\pi}{\lambda}r_{{\rm B},k}}$ denotes the complex-valued channel gain from the origin to the legitimate user $k$; $\mathbf{b}(\theta_{{\rm B},k},r_{{\rm B},k})$ is the channel steering vector, given by
\begin{align}\label{Eq:NF_steering}
	&\mathbf{b}^H(\theta_{{\rm B},k},r_{{\rm B},k})\nn\\
	&\!\!=\!\!\frac{1}{\sqrt{N}}\!\!\left[e^{-j 2 \pi(r^{(0)}_{{\rm B},k}-r_{{\rm B},k})/\lambda}, \cdots, e^{-j 2 \pi(r^{(N-1)}_{{\rm B},k}-r_{{\rm B},k})/\lambda}\right].
\end{align}
Similarly, the NLoS channel components can be modeled as
\begin{equation}
		\mathbf{h}^{H}_{{\rm NLoS},{\rm B},k}=\sqrt{\frac{N}{L_{{\rm B},k}}}\sum_{\ell=1}^{L_{{\rm B},k}}h_{{\rm B},k,\ell}\mathbf{b}^{H}(\theta_{{\rm B},k,\ell},r_{{\rm B},k,\ell}),   \vspace{-6pt}
\end{equation}
where $L_{{\rm B},k}$ denotes the number of scatterers associated with the legitimate user $k$. $h_{{\rm B},k,\ell}$ is the corresponding complex-valued channel gain of the $\ell$-th NLoS path between the origin and the legitimate user $k$. $r_{{\rm B},k,\ell}$ and $\theta_{{\rm B},k,\ell}=\cos
\phi_{{\rm B},k,\ell}$ denote the distance and spatial angle of the $\ell$-th scatterer with respect to the origin with $\phi_{{\rm B},k,\ell}$ being the physical AoD. Based on the above, the multipath near-field channel between the XL-array BS and the legitimate user $k$ is given by
\begin{equation}
	\mathbf{h}^{H}_{{\rm B},k}=\mathbf{h}^{H}_{{\rm LoS},{\rm B},k}+\mathbf{h}^{H}_{{\rm NLoS},{\rm B},k}.
\end{equation}
The near-field channel between the XL-array BS and the eavesdropper can be characterized in a similar way, which is given by
\begin{equation}
	\!\!\mathbf{h}^H_{{\rm E}}= \sqrt{N}h_{{\rm E}}\mathbf{b}^H(\theta_{{\rm E}},r_{{\rm E}})  +\sqrt{\frac{N}{L_{{\rm E}}}}\sum_{\ell=1}^{L_{{\rm E}}}h_{{\rm E},\ell}\mathbf{b}^H(\theta_{{\rm E},\ell},r_{{\rm E},\ell}),
 \vspace{-3pt}
\end{equation}
where $\mathbf{b}(\theta_{{\rm E}},r_{{\rm E}})$ is the steering vector from the XL-array BS to the eavesdropper. $h_{{\rm E}}$ and $h_{{\rm E},\ell}$ denote the complex-valued channel gains. In this paper, we consider the near-field PLS in high-frequency bands, e.g., millimeter-wave (mmWave)
and terahertz (THz), for which the NLoS paths suffer from more severe path loss and thus have negligible power \cite{wu2023near,10130629}. Therefore, the channels between the XL-array BS and the legitimate user $k$ and the eavesdropper can be approximated as $\mathbf{h}^H_{{\rm B},k}\approx\mathbf{h}^{H}_{{\rm LoS},{\rm B},k}$ and $	\mathbf{h}^H_{{\rm E}}\approx \mathbf{h}^{H}_{{\rm LoS},{\rm E}}$, respectively. 
\vspace{-12pt}
\subsection{Signal Model} Let $\mathbf{x} \in \mathbb{C}^{N}$ denote the signal vector emitted by the BS to the $K$ legitimate users, consisting of $K$ information-bearing signals and AN, given by
\begin{equation}
	\mathbf{x} = \sum_{k \in \mathcal{K}} \mathbf{w}_{k} s_{k}+ \mathbf{z},
 \vspace{-3pt}
\end{equation}
where $\mathbf{w}_{k} \in \mathbb{C}^{N}$ and $s_{k} \in \mathbb{C}$ denote the beamforming vector and the information-bearing signal for $k$-th legitimate user, respectively.
In particular, to achieve secure communication, the AN vector $\mathbf{z}\in \mathbb{C}^{N}$ is injected into the desired signal that is transmitted by the BS to impair the eavesdropper.
 As such, the received signals at the legitimate user $k$ and the eavesdropper are given by
 \vspace{-0.25em}
\begin{align}
	y_{{\rm B},k} &= \mathbf{h}^{H}_{{\rm B},k}\left(\sum_{i\in\mathcal{K}} \mathbf{w}_{i}{s}_{i}+\mathbf{z} \right) + n_{{\rm B},k}, \\
	y_{\rm E} &= \mathbf{h}^{H}_{{\rm E}}\left(\sum_{i\in\mathcal{K}}\mathbf{w}_{i}{s}_{i}+\mathbf{z} \right) + n_{\rm E},
\end{align}
respectively, where $n_{{\rm B},k} \sim \mathcal{CN}(0,\sigma^2_{{\rm B},k})$  
and $n_{\rm E}\sim \mathcal{CN}(0,\sigma^2_{\rm E})$  
represent the additive white Gaussian noise (AWGN) at the legitimate user $k$ and eavesdropper, respectively. As such, the achievable rate of legitimate user $k$ is given by $R_{{\rm B},k}=\log_{2}\left(1+\gamma_{{\rm B},k}\right) $, where 
\begin{align}
\gamma_{{\rm B},k}=\frac{| \mathbf{h}_{{\rm B},k}^H\mathbf{w}_{k}| ^2}{\sum_{i\in\mathcal{K}\setminus\{k\}}| \mathbf{h}_{{\rm B},k}^H\mathbf{w}_{i}| ^2+| \mathbf{h}_{{\rm B},k}^H\mathbf{z}| ^2+\sigma^2_{{\rm B},k}}.
\end{align}
In particular, to shed new insights into the PLS in near-field communications, we consider a challenging scenario, where the eavesdropper is closer to the BS than all the legitimate users.
Furthermore, we assume that the eavesdropper is capable of canceling all multi-user interference before decoding the desired information \cite{9133130,9024490}. Accordingly, the channel capacity between the BS and the eavesdropper for wiretapping legitimate user $k$ is given by
\begin{align}
	C_{{\rm E},k}=\log_{2} \left(1+\frac{|\mathbf{h}_{\rm E}^H\mathbf{w}_{k}|^2}{|\mathbf{h}_{\rm E}^H\mathbf{z}|^2+\sigma^2_{\rm E}} \right).
\end{align}
The achievable secrecy rate of the legitimate user $k$ is given by
$R^{\rm Sec}_{k}=\left[R_{{\rm B},k}-C_{{\rm E},k}\right]^{+}$. In this paper,
we consider a general system sum secrecy rate maximization by optimizing the beamforming vectors and AN, i.e., $\{\mathbf{w}_{k}\}$ and $\mathbf{z}$. The resultant optimization problem
is formulated as
\begin{subequations}
	\begin{align}
		({\bf P1}):~~~\max_{\substack{\mathbf{w}_{k}, \mathbf{z}} }  &~~	\sum_{k \in \mathcal{K}}~\left[R_{{\rm B},k}-C_{{\rm E},k}\right]^{+}
		\\
		\text{s.t.}
		&~~ 	\sum_{k \in \mathcal{K}}\|\mathbf{w}_{k}\|^2 +\|\mathbf{z}\|^2\le P,\label{P1:pow_cons}
	\end{align}
 \end{subequations}
where \eqref{P1:pow_cons} denotes the transmit power constraint of the BS with $P$ being the maximum transmit power. 

\vspace{-6pt}
\section{Is AN Beneficial to PLS in Near-Field Communications?}\label{Sec:SC}
To shed useful insights into the near-field secure transmission design, we consider in this section a special case with one legitimate user and one eavesdropper.\footnote{The obtained results can be readily extended to more general cases involving either multiple legitimate users or multiple eavesdroppers by analyzing each legitimate user-eavesdropper pair as the considered special case.} Here, the index of legitimate user $k$ is dropped for notational convenience. 
In addition, for ease of implementation, we consider the MRT-based beamforming\footnote{While zero-forcing like beamforming can generally result in better performance, the reason for choosing the MRT-based beamforming lies in its numerous advantages in terms of analytical tractability, computational complexity, and practical implementation \cite{zhang2023swipt}.} for both the legitimate user and the eavesdropper, i.e., $\mathbf{w}=\sqrt{P_{\rm B}}\frac{\mathbf{h}_{\rm B}}{\|\mathbf{h}_{\rm B}\|}$ and $\mathbf{z}=\sqrt{P_{\rm E}}\frac{\mathbf{h}_{\rm E}}{\|\mathbf{h}_{\rm E}\|}$, to facilitate beam focusing, where $P_{\rm B}$ and $P_{\rm E}$ denote the power allocations to the legitimate user and eavesdropper, respectively. Accordingly, the secrecy rate is given by
{
\begin{align}\label{Eq:sc_sec}
	&R ^{\rm Sec} =  \left[ R_{\rm B}-C_{\rm E}\right]^{+}\nn\\
	&= \left[\log_{2}\left( 1+\frac{|\mathbf{h}_{\rm B}^H\mathbf{w}|^2}{|\mathbf{h}_{\rm B}^H\mathbf{z}|^2+\sigma^2_{{\rm B}}}\right)-\log_{2}\left( 1+\frac{|\mathbf{h}_{\rm E}^H\mathbf{w}|^2}{|\mathbf{h}_{\rm E}^H\mathbf{z}|^2+\sigma^2_{\rm E}}\right)\right]^{+}\nn\\
	&\overset{(a)}{=}\left[\log_{2}\left( 1+\frac{P_{\rm B}g_{\rm B}}{P_{\rm E}g_{\rm B}|\mathbf{b}^H(\theta_{{\rm B}},r_{{\rm B}})\mathbf{b}(\theta_{{\rm E}},r_{{\rm E}})|^2+\sigma^2}\right)\nn \right.\\
	&\left. \quad \quad-\log_{2}\left( 1+\frac{P_{\rm B}g_{\rm E}|\mathbf{b}^H{(\theta_{{\rm E}},r_{{\rm E}})}\mathbf{b}(\theta_{{\rm B}},r_{{\rm B}})|^2}{P_{\rm E}g_{\rm E}+\sigma^2}\right)\right]^{+},
\end{align}
where $(a)$ holds by letting $\sigma^2_{{\rm B}}=\sigma^2_{{\rm E}}=\sigma^2$;\footnote{To simplify the analysis, we assume that both the legitimate user and the eavesdropper have the same noise power. While for the case where the noise power differs, this equation still holds by performing the normalization of the noise power through the respective channel gains, i.e., $g_{\rm B}$ and $g_{\rm E}$.} $g_{\rm B}=N|h_{\rm B}|^2$ and $g_{\rm E}=N|h_{\rm E}|^2$. Besides, note that dropping the operator $[\cdot]^{+}$
in \eqref{Eq:sc_sec} does not affect the
subsequent analysis. This is because the transmission would turn off 
if the achievable secrecy rate
is non-positive, and hence it is omitted in the following for notational
simplicity \cite{9024490}. To obtain a more tractable form of \eqref{Eq:sc_sec}, we first make a key definition below.
\begin{definition}
	\emph{The correlation between any two near-field steering vectors is defined as
		\begin{equation}
			\!\!\!\rho\left( \theta_{i},\theta_{j},r_{i},r_{j}\right) = |\mathbf{b}^{H}(\theta_{i},r_{i})\mathbf{b}(\theta_{j},r_{j})|, \forall i,j \in \mathcal{K}\cup\{{\rm E}\}.
	\end{equation}}
\end{definition}

\begin{remark}
	\emph{In fact, the defined correlation $\rho(\cdot)$ can be viewed as two key factors affecting the system secrecy performance from different perspectives, as described below. First, from the perspective of security provisioning, the correlation measures the amount of \emph{information leakage} from the legitimate user to the eavesdropper. On the other hand, from a communication perspective, the correlation reflects the \emph{interference intensity} among legitimate users. Besides, it is worth noting that the correlation belongs to a fixed interval, that is, $\rho \in (0,1]$ and 
 has a symmetry property, i.e., $	\rho(\theta_{i},r_{i},\theta_{j},r_{j})=	\rho(\theta_{j},r_{j},\theta_{i},r_{i})$. Accordingly, the secrecy rate in \eqref{Eq:sc_sec} can be re-expressed as
	\begin{align}\label{Eq1:secrecyrate_AN}
			R ^{\rm Sec}&=\log_{2}\left( 1+\frac{P_{\rm B}g_{\rm B}}{P_{\rm E}g_{\rm B}\rho^2+\sigma^2}\right) -\log_{2}\left( 1+\frac{P_{\rm B}g_{\rm E}\rho^2}{P_{\rm E}g_{\rm E}+\sigma^2}\right)\nn\\
			&\triangleq\log_{2}\left( \frac{A+B}{A+C}\right),
		\end{align}
  where
	\begin{equation}\label{Eq:coef}
		\begin{cases}
			A=P_{\rm E}^2g_{\rm B}g_{\rm E}\rho^2+P_{\rm E}g_{\rm B}\rho^2\sigma^2+P_{\rm E}g_{\rm E}\sigma^2+\sigma^4, \\
			B=P_{\rm B}P_{\rm E}g_{\rm B}g_{\rm E}+P_{\rm B}g_{\rm B}\sigma^2,\\
			C=P_{\rm B}P_{\rm E}g_{\rm B}g_{\rm E}\rho^4+P_{\rm B}g_{\rm E}\rho^2\sigma^2.
		\end{cases}
		\end{equation}}
\end{remark}

\subsection{Near-field Secure Transmission without AN}
First, we consider the case without AN and derive the secure condition directly.
\begin{lemma}\label{Pro:secure_reg}
	\emph{The near-field secure transmission without AN is attainable only when the following condition holds
		\begin{equation}\label{Eq:secure_condi}
			g_{\rm B}-g_{\rm E}\rho^2 \ge 0.
	\end{equation}}
\end{lemma}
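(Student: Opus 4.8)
The plan is to specialize the compact secrecy-rate expression $R^{\rm Sec}=\log_{2}((A+B)/(A+C))$ from \eqref{Eq1:secrecyrate_AN} to the no-AN regime and simply read off the sign condition. Switching off AN means $P_{\rm E}=0$, so the whole power budget is allocated to the legitimate user, i.e. $P_{\rm B}=P$. Plugging $P_{\rm E}=0$ into \eqref{Eq:coef} kills every term carrying a factor of $P_{\rm E}$, leaving $A=\sigma^4$, $B=P g_{\rm B}\sigma^2$, and $C=P g_{\rm E}\rho^2\sigma^2$.

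Next I would invoke the observation made right after \eqref{Eq:sc_sec}: the transmitter is active (equivalently, the $[\cdot]^{+}$ operator is inactive) exactly when $R^{\rm Sec}\ge 0$. Since $\log_{2}$ is increasing, $R^{\rm Sec}\ge 0 \Leftrightarrow A+B\ge A+C \Leftrightarrow B\ge C$. Substituting the reduced $B$ and $C$ and cancelling the common strictly positive factor $P\sigma^2$ yields $g_{\rm B}\ge g_{\rm E}\rho^2$, i.e. \eqref{Eq:secure_condi}. The chain of equivalences actually shows the condition is necessary \emph{and} sufficient, so the stated ``only when'' follows a fortiori; I would additionally remark that this threshold is completely independent of the transmit power $P$ and the noise level $\sigma^2$, which is the real message of the lemma --- a purely geometry/pathloss-driven feasibility condition that no power scaling can relax once AN is absent.

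The only items needing a line of justification are the positivity facts that make the cancellations legitimate: $g_{\rm B}=N|h_{\rm B}|^2>0$, $g_{\rm E}=N|h_{\rm E}|^2>0$, and $\rho\in(0,1]$ from Remark 1, so that $B,C\ge 0$ and the common factor $P\sigma^2$ is strictly positive. I do not expect any genuine obstacle here: once ``no AN $\Rightarrow P_{\rm E}=0$'' is made explicit, the result is a one-line consequence of the monotonicity of the logarithm. If one prefers not to route through \eqref{Eq:coef}, the identical conclusion drops out of the two-logarithm form in \eqref{Eq1:secrecyrate_AN} with $P_{\rm E}=0$, by comparing $1+P g_{\rm B}/\sigma^2$ against $1+P g_{\rm E}\rho^2/\sigma^2$.
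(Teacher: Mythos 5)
Your proposal is correct and follows essentially the same route as the paper: set the AN power to zero in \eqref{Eq1:secrecyrate_AN} and observe, via monotonicity of $\log_2$, that the resulting secrecy rate is non-negative iff $g_{\rm B}-g_{\rm E}\rho^2\ge 0$ (the paper just writes the no-AN rate in the single-logarithm form \eqref{Eq1:con_secrecyrate} rather than passing through the $A,B,C$ coefficients, which is an equivalent computation). The only cosmetic difference is that you fix $P_{\rm B}=P$, whereas the paper keeps $P_{\rm B}$ general; since the condition is independent of $P_{\rm B}$ this changes nothing.
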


\begin{proof}
	By removing the terms related to AN in \eqref{Eq1:secrecyrate_AN}, we have
	\begin{align}\label{Eq1:con_secrecyrate}
		R^{\rm Sec}_{\rm w/o} 
    =\log_{2}\left( 1+\frac{P_{\rm B}\left(g_{\rm B}-g_{\rm E}\rho^2 \right) }{P_{\rm B}g_{\rm E}\rho^2+\sigma^2}\right).
	\end{align}
	The system is secure only when the condition holds, i.e., $g_{\rm B}-g_{\rm E}\rho^2 \ge 0$. This completes the proof.
\end{proof}

Lemma \ref{Pro:secure_reg} provides the secure condition for the case where AN is not adopted and reveals three key factors that affect the
near-field secure transmission: the channel gains of the legitimate user and the eavesdropper, as well as the correlation between them. Notably, the secure condition is of the differential form and can, in fact, be considered as the \emph{effective} quantity of information received by the legitimate user. Specifically, the first term indicates the quantity of transmitted information, whereas the second item signifies the information leaked to the eavesdropper.
However, it is intractable to analyze the effects of these factors on the secure condition because the correlation generally decreases when the legitimate user is farther away from the eavesdropper, and vice versa.  Furthermore, it can be observed that when the legitimate user and eavesdropper reside in the same spatial angle, the distance interval for satisfying the secure condition is very narrow. Outside of this interval, the near-field secure communication is not achievable. For example, when $N=256$, $f=100$ GHz, $\theta_{\rm E}=\theta_{\rm B}=0$, and $r_{\rm E}=3$ m, the secure region for the legitimate user is approximately $[3.5,8.2]$ m, which is around $5\%$ portion of the near-field region. This leads to an intriguing question: \emph{Is AN advantageous to systems where the secure condition cannot be met?}  To answer this question, the next subsection will delve into the near-field secure transmission with AN taken into account.
\vspace{-12pt}
\subsection{Near-field Secure Transmission with AN}
We now consider the case where the AN is adopted for security provisioning. The resultant condition for guaranteeing security is presented as follows.



\begin{lemma}\label{Pro:sec_condi}
	\emph{When the secure condition \eqref{Eq:secure_condi} does not hold, i.e., $g_{\rm B}-g_{\rm E}\rho^2 < 0$, the system becomes insecure. In such case, if
     \begin{equation}\label{Eq:secure_conddix}
			g_{\rm B}-g_{\rm E}\rho^2 \ge -\frac{Pg_{\rm B}g_{\rm E}(1-\rho^4)}{\sigma^2},
	\end{equation}
 the system can regain security by allocating at least $P_{{\rm E, min}}$ power to AN. In particular,
\begin{equation}\label{Eq:req_sec_pow}
    P_{{\rm E, min}}=\frac{(g_{\rm E}\rho^2-g_{\rm B})\sigma^2}{g_{\rm B}g_{\rm E}(1-\rho^4)}<P.
\end{equation}
}
\end{lemma}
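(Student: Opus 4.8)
The plan is to work directly with the compact expression $R^{\rm Sec}=\log_{2}\!\left(\frac{A+B}{A+C}\right)$ established in \eqref{Eq1:secrecyrate_AN}, together with the power split $P_{\rm B}+P_{\rm E}\le P$ coming from \eqref{P1:pow_cons}. Since $A$ in \eqref{Eq:coef} is a sum of strictly positive terms, $A>0$ always, and therefore the system is secure (i.e., $R^{\rm Sec}>0$) if and only if $B>C$. So the first step reduces the entire claim to determining the sign of $B-C$.

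Second, I would substitute $B$ and $C$ from \eqref{Eq:coef} and factor out the common $P_{\rm B}$:
\[
B-C=P_{\rm B}\left[P_{\rm E}\,g_{\rm B}g_{\rm E}\left(1-\rho^{4}\right)+\sigma^{2}\left(g_{\rm B}-g_{\rm E}\rho^{2}\right)\right].
\]
Since information is actually transmitted only when $P_{\rm B}>0$, security is equivalent to the bracketed quantity being positive. Under the hypothesis $g_{\rm B}-g_{\rm E}\rho^{2}<0$ and using $\rho\in(0,1)$ so that $1-\rho^{4}>0$, solving this linear inequality for $P_{\rm E}$ gives
\[
P_{\rm E}>\frac{(g_{\rm E}\rho^{2}-g_{\rm B})\sigma^{2}}{g_{\rm B}g_{\rm E}(1-\rho^{4})}\;=:\;P_{\rm E,min}.
\]
I would also point out that the bracketed term is strictly increasing in $P_{\rm E}$, which is what justifies the wording ``by allocating \emph{at least} $P_{\rm E,min}$ power to AN,'' and that the degenerate case $\rho=1$ must be treated separately: then $1-\rho^{4}=0$, the bracket equals $\sigma^{2}(g_{\rm B}-g_{\rm E})<0$, and no AN power restores security, which is consistent with the statement (\eqref{Eq:secure_condi} fails strictly in that regime).

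Third, I would impose feasibility under the power budget. Because the security condition just derived depends on $P_{\rm E}$ \emph{alone} and not on the split between $P_{\rm B}$ and $P_{\rm E}$, a secure allocation exists precisely when $P_{\rm E,min}<P$: given this, choose any $P_{\rm E}\in(P_{\rm E,min},P)$ and set $P_{\rm B}=P-P_{\rm E}>0$; this proves \eqref{Eq:req_sec_pow}. Finally, multiplying $P_{\rm E,min}<P$ through by $g_{\rm B}g_{\rm E}(1-\rho^{4})/\sigma^{2}>0$ and rearranging yields exactly $g_{\rm B}-g_{\rm E}\rho^{2}\ge -\,Pg_{\rm B}g_{\rm E}(1-\rho^{4})/\sigma^{2}$, i.e., \eqref{Eq:secure_conddix}.

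The only genuinely non-mechanical point — hence the main obstacle — is recognizing that after factoring $B-C$ the security condition decouples from the power split, so that the budget constraint enters solely through the single scalar inequality $P_{\rm E,min}<P$; once this is seen, the rest is routine algebra. The remaining care goes into the boundary cases (the degenerate $\rho=1$, and the equality version of \eqref{Eq:secure_conddix} corresponding to $P_{\rm E,min}=P$, where $B=C$ and $R^{\rm Sec}=0$) so that the stated condition is tight.
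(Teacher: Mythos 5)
Your proof is correct, and it takes a more direct route than the paper's. The paper first saturates the power budget ($P_{\rm B}=P-P_{\rm E}$), turning $\tilde{B}-\tilde{C}$ into a quadratic $\xi(P_{\rm E})$, and then determines its sign via the discriminant $\Delta=\left[Pg_{\rm B}g_{\rm E}(1-\rho^4)+(g_{\rm B}-g_{\rm E}\rho^2)\sigma^2\right]^2$ and a case analysis of the two roots $P_{\rm E}^{(1)}=\frac{(g_{\rm E}\rho^2-g_{\rm B})\sigma^2}{g_{\rm B}g_{\rm E}(1-\rho^4)}$ and $P_{\rm E}^{(2)}=P$. Your factorization $B-C=P_{\rm B}\left[P_{\rm E}g_{\rm B}g_{\rm E}(1-\rho^4)+\sigma^2(g_{\rm B}-g_{\rm E}\rho^2)\right]$ makes explicit exactly the structure the paper only recovers implicitly through its root computation: after substituting $P_{\rm B}=P-P_{\rm E}$, the paper's quadratic is $(P-P_{\rm E})$ times your bracket, so its two roots are precisely $P$ and your $P_{\rm E,\min}$. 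What your approach buys is that the security condition is seen to be linear in $P_{\rm E}$ and independent of the $P_{\rm B}$/$P_{\rm E}$ split, so the budget enters only through the single feasibility inequality $P_{\rm E,\min}<P$, which rearranges directly to \eqref{Eq:secure_conddix}; this avoids the discriminant casework entirely and, as a bonus, you handle the degenerate $\rho=1$ case and the boundary $P_{\rm E,\min}=P$ (where only $R^{\rm Sec}=0$ with $P_{\rm B}=0$ is attainable) more carefully than the paper does. The paper's version, by committing to full power use up front, dovetails with the subsequent optimal power allocation analysis in Lemma 3, but for the security-threshold claim itself your argument is the cleaner one.
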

\begin{proof}
    Please refer to Appendix \ref{App1}.
\end{proof}
		\begin{remark}[How does AN make an impact?]
			\emph{Regarding the derived condition in \eqref{Eq:secure_conddix}, we can draw several important insights: 1) The AN is crucial for near-field security provisioning, transforming insecure systems into secure ones. Moreover, it is observed from \eqref{Eq:req_sec_pow} that the power required for AN is irrespective of the total transmit power, but depends on the system setting, i.e., the location information of legitimate user and eavesdropper. 2) The condition for regaining security is jointly determined by the transmit power budget, channel gains of the legitimate user and eavesdropper, and the correlation coefficient. When the location information
of the legitimate user and the eavesdropper (i.e., angle
and distance) is known $a$ $priori$, these factors are fixed. 3) It is worth noting that once the minimum power for secrecy provisioning is provided, the secure condition is independent of the power allocation design. This suggests that the secure condition serves as a boundary that is inherently embedded in the considered system. 4) From Lemma \ref{Pro:sec_condi}, we can conclude that there must exist the maximum secrecy rate of the considered system as long as the secure condition is satisfied properly. 5) The power allocated to AN for secure transmission can actually serve as an indicator of how secure a legitimate user is relative to the eavesdropper, hence acting as a crucial component of the proposed low-complexity approach in Section \ref{Se:Low-com}.}
		\end{remark}

		The secrecy rate $R^{\rm Sec}$ in \eqref{Eq1:secrecyrate_AN} is still in a complicated form, which, in general, is challenging to characterize its maximum value. To tackle this issue, we optimize the power allocation to achieve the maximum secrecy rate.

%
	
\begin{lemma}\label{Le:opt_pow}
	\emph{The optimal power allocation to achieve the maximum secrecy rate is given as follows. 
		\begin{itemize}
  			\item If the following condition is satisfied
   \begin{equation}\label{Eq:sec_max}
       g_{\rm B}-g_{\rm E}\rho^2< \frac{P^2 g_{\rm B}g_{\rm E}(g_{\rm E}-g_{\rm B}\rho^2)\rho^2+P(g_{\rm E}^2-g_{\rm B}^2)\rho^2\sigma^2}{\sigma^4},
   \end{equation}
            the achievable secrecy rate first increases and then decreases, with the maximum attained with the following power allocation
			\begin{equation}\label{Eq:pow_max}
				\begin{cases}
					P_{\rm E}= -\frac{\Upsilon_2}{{2\Upsilon_1}}-\frac{\sqrt{{\Upsilon_2^2}-4{\Upsilon_1}\Upsilon_3}}{2\Upsilon_1},\\
					P_{\rm B}=P-P_{\rm E},
				\end{cases}
			\end{equation}
   	where
		\begin{equation}\label{Eq:Deriv_Coe}
			\begin{cases}
				\Upsilon_1  =P   g_{\rm B}^2   g_{\rm E}^2 \rho^2(\rho^4-1)+   g_{\rm B}^2 g_{\rm E} \rho^4 (\rho^2-1)\sigma^2\\
				~~~~~~~+  g_{\rm B}g_{\rm E}^2(\rho^2-1)\sigma^2, \\
				\Upsilon_2  =2 P  g_{\rm B} g_{\rm E}(g_{\rm B}+g_{\rm E})(\rho^2-1)\rho^2  \sigma^2\\
                ~~~~~~~+2g_{\rm B}g_{\rm E}(\rho^4-1)\sigma^4,\\
				\Upsilon_3  =P^2  g_{\rm B}  g_{\rm E} (g_{\rm E}-g_{\rm B}\rho^2)\rho^2\sigma^2 +P  (g_{\rm E}^2 -g_{\rm B}^2)\rho^2\sigma^4\\
				~~~~~~~+(g_{\rm E}\rho^2 -g_{\rm B})\sigma^6 .
			\end{cases}
	\end{equation}
			\item Otherwise,
          the secrecy rate is monotonically decreasing with $P_{\rm E}$, and the maximum is achieved without allocating power to AN, i.e.,
		\begin{equation}\label{Eq:pow_mon}
			\begin{cases}
				P_{\rm E}= 0,\\
				P_{\rm B}=P.
			\end{cases}
		\end{equation}
		\end{itemize}}
\end{lemma}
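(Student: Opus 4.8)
The plan is to prove Lemma~\ref{Le:opt_pow} by treating the secrecy rate as a function of a single variable. Since $P_{\rm B}+P_{\rm E}=P$ at optimality (allocating less than the full budget can only hurt, as more power to $\mathbf{w}$ strictly increases $R_{\rm B}$ for fixed $P_{\rm E}$, and more power to $\mathbf{z}$ is only considered up to the point it helps), I would substitute $P_{\rm B}=P-P_{\rm E}$ into \eqref{Eq1:secrecyrate_AN} and study $f(P_{\rm E})\triangleq\frac{A+B}{A+C}$ on the interval $[0,P]$. Note that $A$ depends only on $P_{\rm E}$ (degree two), while $B$ and $C$ are affine in $P_{\rm B}=P-P_{\rm E}$, hence affine in $P_{\rm E}$; so $A+B$ and $A+C$ are each quadratics in $P_{\rm E}$. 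Maximizing $\log_2 f$ is equivalent to maximizing $f$, so I would compute $f'(P_{\rm E})$ via the quotient rule and set the numerator to zero.

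The key observation is that $\frac{d}{dP_{\rm E}}\!\left(\frac{A+B}{A+C}\right)=\frac{(A+C)(A+B)'-(A+B)(A+C)'}{(A+C)^2}$, and the numerator, being a difference of products of quadratics with a common quadratic term $A$, simplifies: the $P_{\rm E}^4$ and the pure-$A^2$ pieces cancel, leaving a quadratic $\Upsilon_1 P_{\rm E}^2+\Upsilon_2 P_{\rm E}+\Upsilon_3$ in $P_{\rm E}$, whose coefficients are exactly those in \eqref{Eq:Deriv_Coe}. I would verify the signs: using $\rho\in(0,1]$ one checks $\rho^4-1\le 0$, $\rho^2-1\le 0$, so $\Upsilon_1<0$ (strictly, when $\rho<1$), i.e.\ the quadratic opens downward. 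Its value at $P_{\rm E}=0$ is $\Upsilon_3$, and a short rearrangement shows $\Upsilon_3>0$ is precisely equivalent to condition \eqref{Eq:sec_max} (divide through by $\sigma^2>0$ and move terms). When $\Upsilon_3>0$, the downward parabola is positive at $0$, so $f'>0$ near $P_{\rm E}=0$ and $f$ has a unique interior stationary point — the larger root is spurious (it makes the derivative's leading behavior negative), so the maximizer is the smaller root $P_{\rm E}=-\frac{\Upsilon_2}{2\Upsilon_1}-\frac{\sqrt{\Upsilon_2^2-4\Upsilon_1\Upsilon_3}}{2\Upsilon_1}$, matching \eqref{Eq:pow_max}; here I would also confirm this root lies in $(0,P)$ using Lemma~\ref{Pro:sec_condi} (the security-regaining threshold guarantees feasibility) and that $f$ is increasing then decreasing, giving the "first increases then decreases" claim. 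When $\Upsilon_3\le 0$, since $\Upsilon_1<0$ and (one checks) $\Upsilon_2\le 0$ as well, the quadratic numerator is $\le 0$ throughout $[0,P]$, so $f'\le 0$ and $f$ is monotonically decreasing, whence the optimum is $P_{\rm E}=0$, $P_{\rm B}=P$, giving \eqref{Eq:pow_mon}.

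The routine but bookkeeping-heavy part is the algebraic reduction of the derivative numerator to the compact quadratic $\Upsilon_1 P_{\rm E}^2+\Upsilon_2 P_{\rm E}+\Upsilon_3$; I would carry this out by expanding $A$, $B$, $C$ from \eqref{Eq:coef} with $P_{\rm B}=P-P_{\rm E}$, collecting powers of $P_{\rm E}$, and repeatedly using $g_{\rm B}g_{\rm E}$, $\rho^2$ groupings to match \eqref{Eq:Deriv_Coe}. The main obstacle I anticipate is not this expansion itself but rigorously pinning down the signs of $\Upsilon_1,\Upsilon_2,\Upsilon_3$ and the root locations: one must show $\Upsilon_2\le 0$ unconditionally (straightforward from $\rho^2-1\le0$, $\rho^4-1\le0$) and that in the case $\Upsilon_3>0$ the smaller root genuinely falls inside $(0,P)$ rather than beyond the power budget, which is where the feasibility bound \eqref{Eq:secure_conddix}--\eqref{Eq:req_sec_pow} from Lemma~\ref{Pro:sec_condi} must be invoked. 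Finally I would remark that the concavity of $\log_2$ composed with the established unimodality of $f$ confirms the stationary point is the global maximizer on $[0,P]$, completing the proof.
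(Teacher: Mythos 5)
Your proposal follows essentially the same route as the paper's own proof: enforce $P_{\rm B}=P-P_{\rm E}$, differentiate $\zeta=(A+\tilde{B})/(A+\tilde{C})$ in $P_{\rm E}$, observe that the derivative's numerator collapses to the quadratic $\Upsilon_1P_{\rm E}^2+\Upsilon_2P_{\rm E}+\Upsilon_3$, and conclude from $\Upsilon_1<0$, $\Upsilon_2<0$ together with the sign of $\Upsilon_3$ (which is equivalent to \eqref{Eq:sec_max}) that the rate is either unimodal with maximizer at the sign-change root or monotonically decreasing. One minor slip: with $\Upsilon_1<0$ and $\Upsilon_3>0$ the two roots have opposite signs and the maximizer in \eqref{Eq:pow_max} is the \emph{larger} (positive) root rather than the "smaller" one as you label it, but the formula you write and the sign-change argument identifying it are correct, so this does not affect the proof.
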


\begin{proof}
		Please refer to Appendix \ref{App2}.
\end{proof}

 \begin{figure*}[t]
	\centering
	\subfigure[\textbf{Case 1:} secure condition \eqref{Eq:secure_condi} does not hold but \eqref{Eq:secure_conddix} is met.]{\includegraphics[width=0.32\textwidth]{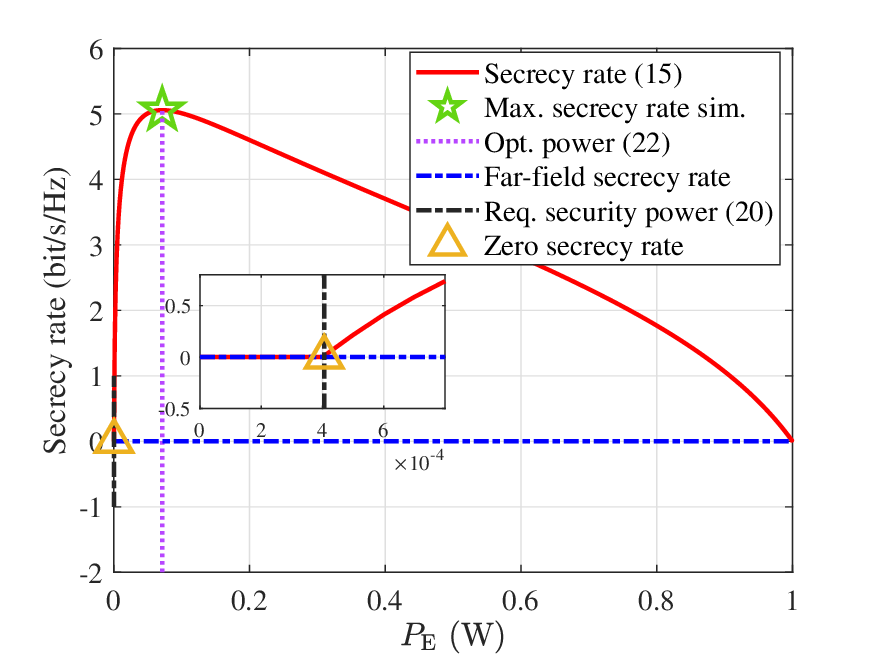}\label{fig:insecure}}
	\subfigure[\textbf{Case 2:} secure condition holds and \eqref{Eq:sec_max} is satisfied. ]{\includegraphics[width=0.32\textwidth]{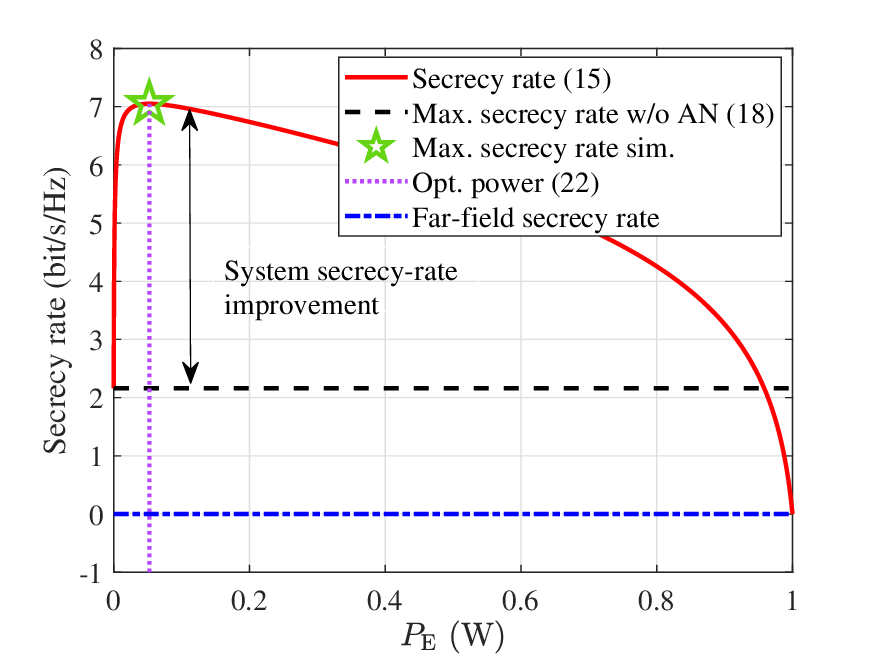}\label{fig:maxsecure}}
	\subfigure[\textbf{Case 3:} condition in \eqref{Eq:sec_max} does  not hold.]{\includegraphics[width=0.32\textwidth]{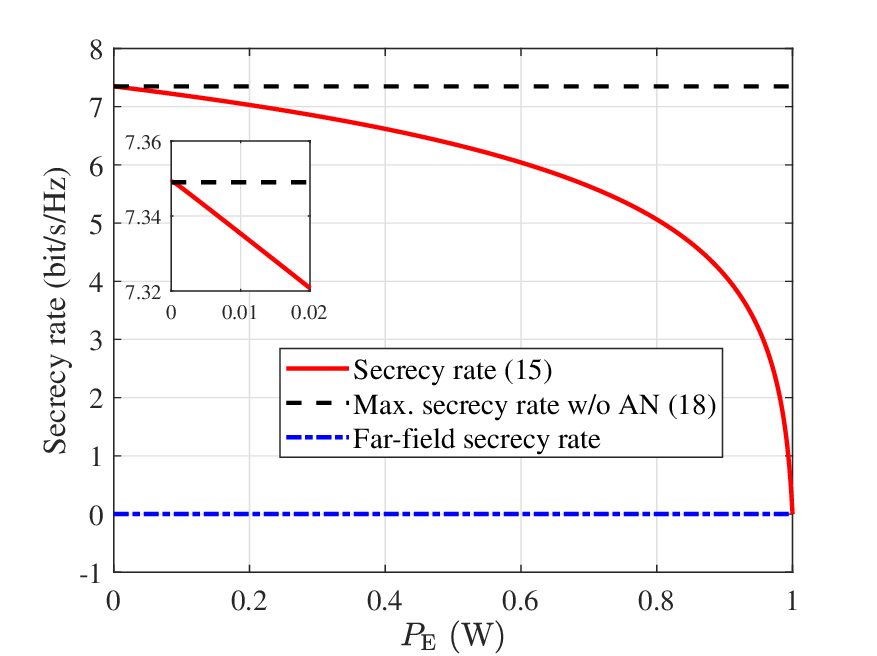}\label{fig:monsecure}}
	\caption{Three cases of secrecy rate versus the power allocated to AN.}\label{Fig:AN_effect}
 \vspace{-22pt}
\end{figure*}

Lemma \ref{Le:opt_pow} reveals an interesting result related to the effect of AN. Note that for the case without AN taken into account, it can be easily verified by Lemma \ref{Pro:secure_reg} that when the secure condition is satisfied, the achievable secrecy rate increases monotonically with the power allocated to the legitimate user. However, for the case where AN is considered, the bound on which the secrecy rate changes from increasing first and then decreasing to monotonically decreasing is shown to be strictly greater than zero, i.e., $\frac{P^2 g_{\rm B}g_{\rm E}(g_{\rm E}-g_{\rm B}\rho^2)\rho^2+P(g_{\rm E}^2-g_{\rm B}^2)\rho^2\sigma^2}{\sigma^4} > 0$. This suggests that even if the secure condition holds, i.e., $g_{\rm B}-g_{\rm E}\rho^2 \ge 0$, the secrecy rate will first increase and then decrease, indicating the ability of AN in enhancing the secrecy rate compared to the case without AN.


\begin{example}\label{Ex:AN}
	\emph{By means of Lemmas \ref{Pro:sec_condi} and \ref{Le:opt_pow}, we show the necessity of AN not only in the security provisioning but also in the secrecy rate enhancement.  
		We in the following present a concrete example to illustrate the effect of AN in detail. Specifically, we consider a near-field PLS system with $N=256$, $f=100$ GHz and $P = 30$ dBm, in which the legitimate user and eavesdropper are located at polar coordinates $(0,0.05r_{\rm Ray})$ and $(0,0.3r_{\rm Ray})$, respectively. We depict the achievable secrecy rate versus the power allocation to the eavesdropper in Fig. \ref{Fig:AN_effect} to show the effect of AN on the achievable secrecy rate. Based on Lemmas \ref{Pro:sec_condi} and \ref{Le:opt_pow}, the effect of AN on the secrecy rate can be divided into three cases.
  \begin{itemize}
        \item \textbf{Case 1:} When the system is insecure, i.e., $g_{\rm B}-g_{\rm E}\rho^2 < 0$, but the condition \eqref{Eq:secure_conddix} is met, it is shown in Fig. \ref{fig:insecure} that the secrecy rate first increases and then decreases with $P_{\rm E}$, starting from the secrecy rate less than zero. Besides, it is observed that the insecure system regains its security using only a marginal amount of power, which is consistent with the required power derived in \eqref{Eq:req_sec_pow}. Moreover, we observe that the maximum secrecy rate exists (marked by a pentagram), and the corresponding power allocation matches well with the derived optimal power allocation in \eqref{Eq:pow_max}.
        \item \textbf{Case 2:} For this case where the system is secure and the condition \eqref{Eq:sec_max} holds, several important observations are made as follows. Interestingly, we see that the trend in the secrecy rate is similar to that in Case 1, yet it is entirely distinct from the trend in the scenario without AN as described in Lemma \ref{Pro:secure_reg}. Besides, one can observe that even if the system is secure, allocating a moderate amount of power to the AN can significantly enhance the system secrecy rate.
        Additionally, it is worth noting that in most cases, the power allocation to AN is beneficial to the improvement of the secrecy rate. This is due to the fact that the AN can fully take advantage of the unique beam focusing characteristic, thereby significantly impairing the eavesdropper's performance.
        \item \textbf{Case 3:} For this case, the condition in \eqref{Eq:sec_max} does not hold, the secrecy rate monotonically decreases with the power allocated to the eavesdropper. In other words, the information leaked to the eavesdropper is very limited, and thus AN is no longer needed. 
  \end{itemize} }
\end{example}

   \vspace{-3pt}
In what follows, we investigate the effects of angle and distance on the achievable secrecy rate. First, we present a useful lemma below.
\begin{lemma}\label{Le:corre}
	\emph{The correlation of two near-field steering vectors $\rho(\theta_{i},r_{i},\theta_{j},r_{j})$ can be approximated as
		\begin{align}\label{Eq:correlation_fre}	\!\!\!\!\rho(\theta_{i},r_{i},\theta_{j},r_{j})&=G(\beta_1,\beta_2)\approx\left|\!\frac{\tilde{C}(\beta_1,\beta_2)\!+\!\jmath\tilde{S}(\beta_1,\beta_2) }{2\beta_2}\!\right|\!,\!
		\end{align}
		where $\tilde{C}(\beta_1,\beta_2)={C}(\beta_1+\beta_2)-{C}(\beta_1-\beta_2)$ and $\tilde{S}(\beta_1,\beta_2) ={S}(\beta_1+\beta_2)-{S}(\beta_1-\beta_2)$ with ${C}(\beta)=\int_{0}^{\beta}\cos(\frac{\pi}{2}t^2)dt$ and ${S}(\beta)=\int_{0}^{\beta}\sin(\frac{\pi}{2}t^2)dt$ being the Fresnel integrals; and  
		\begin{equation}\label{beta1beta2}
			\beta_1=\frac{(\theta_{j}-\theta_{i})}{\sqrt{d\left|\frac{1-\theta_{i}^2}{r_i}- \frac{1-\theta_{j}^2}{r_j}\right| }},~~\beta_2=\frac{N}{2}\sqrt{d\left|\frac{1-\theta_{i}^2}{r_i}- \frac{1-\theta_{j}^2}{r_j}\right| }.
	\end{equation} }
\end{lemma}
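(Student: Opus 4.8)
The plan is to start from the exact expression for the inner product $\mathbf{b}^{H}(\theta_i,r_i)\mathbf{b}(\theta_j,r_j)$, which by the definition of the near-field steering vector in \eqref{Eq:NF_steering} is $\frac{1}{N}\sum_{n=0}^{N-1} e^{\jmath 2\pi\big[(r_{i}^{(n)}-r_i)-(r_{j}^{(n)}-r_j)\big]/\lambda}$. The first step is to invoke the standard second-order (Fresnel) Taylor expansion of the element distance $r^{(n)} = \sqrt{r^2 - 2 r\theta\delta_n d + \delta_n^2 d^2} \approx r - \theta\delta_n d + \frac{(1-\theta^2)\delta_n^2 d^2}{2r}$, so that the phase difference in the exponent becomes approximately $-(\theta_i-\theta_j)\delta_n d + \tfrac{\delta_n^2 d^2}{2}\big(\tfrac{1-\theta_i^2}{r_i}-\tfrac{1-\theta_j^2}{r_j}\big)$, i.e. a quadratic polynomial in $\delta_n$. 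This reduces the correlation to the modulus of a quadratic-phase exponential sum, with a linear coefficient controlled by the angle difference and a quadratic coefficient controlled by the difference of the ``distance-curvature'' terms $\tfrac{1-\theta^2}{r}$.

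The second step is to pass from the sum to an integral: since $\delta_n$ ranges over $\tfrac{2n-N-1}{2}$, for large $N$ the sum $\frac{1}{N}\sum_n e^{\jmath(a\delta_n + c\delta_n^2)}$ is well-approximated by $\frac{1}{N}\int_{-N/2}^{N/2} e^{\jmath(a x + c x^2)}\,dx$. Completing the square in the exponent, $ax+cx^2 = c\big(x+\tfrac{a}{2c}\big)^2 - \tfrac{a^2}{4c}$, and substituting $t = \sqrt{\tfrac{2|c|}{\pi}}\big(x+\tfrac{a}{2c}\big)$ turns the integral into a difference of Fresnel integrals $C(\cdot)$ and $S(\cdot)$ evaluated at the two endpoints. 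Identifying $a = -(\theta_i-\theta_j) d$ (up to sign, immaterial inside the modulus), $c = \tfrac{d^2}{2}\big(\tfrac{1-\theta_i^2}{r_i}-\tfrac{1-\theta_j^2}{r_j}\big)$, and bookkeeping the normalization, the endpoints of the $t$-integral become exactly $\beta_1 \pm \beta_2$ with $\beta_1,\beta_2$ as defined in \eqref{beta1beta2}, and the prefactor collapses to $\tfrac{1}{2\beta_2}$; the global phase factor $e^{-\jmath a^2/(4c)}$ disappears under $|\cdot|$. This yields $\rho \approx \big|\tfrac{\tilde C(\beta_1,\beta_2) + \jmath \tilde S(\beta_1,\beta_2)}{2\beta_2}\big|$, which is the claim.

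I expect the main obstacle to be the careful tracking of constants and variable substitutions so that the endpoints come out as precisely $\beta_1\pm\beta_2$ and the prefactor as precisely $\tfrac{1}{2\beta_2}$ — in particular, correctly absorbing the factor $d=\lambda/2$, the $N/2$ integration limits, and the $\sqrt{2|c|/\pi}$ scaling into the definitions of $\beta_1$ and $\beta_2$, and verifying that the absolute value inside the square root in \eqref{beta1beta2} handles both signs of $c$. The Riemann-sum-to-integral approximation itself is routine for large $N$ with bounded step size, and the Fresnel-integral identity is a textbook computation, so those steps require only brief justification; the delicate part is purely the normalization arithmetic. (One should also note the degenerate case $c=0$, i.e. equal curvature terms, where $\beta_2\to 0$ and the expression must be read in the limiting sense $\rho\to |\sinc(\cdot)|$, but since the statement already writes $\beta_2$ in a denominator this can be handled by a remark.)
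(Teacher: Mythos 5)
Your derivation is correct and is essentially the same argument the paper points to: the paper omits the proof and cites [Lemma~1, \cite{10195974}], whose proof is exactly this Fresnel (second-order Taylor) expansion of the element distances, replacement of the sum over $\delta_n$ by an integral over $[-N/2,N/2]$, and completion of the square to express the quadratic-phase integral via Fresnel integrals with endpoints $\beta_1\pm\beta_2$ and prefactor $1/(2\beta_2)$. Your bookkeeping of $a$, $c$, the substitution scale $\sqrt{2|c|/\pi}$, and the degenerate case $\beta_2\to 0$ matches the intended normalization, so no gap remains.
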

\begin{proof}
	The proof is similar to that in [Lemma 1,\cite{10195974}] and
	hence is omitted for brevity.
\end{proof}
\begin{proposition}\label{apppro1}
   \emph{The secrecy rate is monotonically decreasing with respect to the correlation $\rho$ between the legitimate user and eavesdropper.
   Moreover, combined with Lemma \ref{Le:corre}, we can draw the following insights:
   \begin{itemize}
       \item The secrecy rate is symmetric in terms of the angle difference between the legitimate user and eavesdropper, i.e., $\theta_{\rm B}-\theta_{\rm E}$. Besides, the secrecy rate reaches its minimum value when the angle difference is zero and, in general, decreases with the angle difference from both sides. This is because $\rho$ in \eqref{Eq:correlation_fre} is symmetric regarding $\beta_1$  in \eqref{beta1beta2}, the maximum $\rho$ is obtained at $\theta_{\rm B}=\theta_{\rm E}$.
       \item The effect of distance on the secrecy rate is considerably complicated and thus difficult to analyze. Specifically, consider a typical case where $\theta_{\rm B}=\theta_{\rm E}$. The envelope of $\rho$ is shown to decrease as $\beta_2$ in \eqref{beta1beta2} increases \cite{zhang2023mixed} and thus decreases with the distance difference. However, the distance variation also affects the channel gains and the secrecy rate. The effect will be evaluated numerically later. 
   \end{itemize} }
\end{proposition}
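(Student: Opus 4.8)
The plan is to prove the monotonicity directly from the closed form $R^{\rm Sec}=\log_{2}\!\big(\frac{A+B}{A+C}\big)$ in \eqref{Eq1:secrecyrate_AN}, treating $t\triangleq\rho^{2}\in(0,1]$ as the variable. Since $\log_{2}(\cdot)$ is increasing and $\rho\mapsto\rho^{2}$ is increasing for $\rho>0$, it suffices to show that $g(t)\triangleq\frac{A+B}{A+C}$ is strictly decreasing on $(0,1]$. Reading off \eqref{Eq:coef}, I would first record the structural facts that make the argument transparent: $B$ is independent of $t$; $A=\alpha t+\delta$ is affine in $t$ with $\alpha=P_{\rm E}g_{\rm B}(P_{\rm E}g_{\rm E}+\sigma^{2})>0$ and $\delta=\sigma^{2}(P_{\rm E}g_{\rm E}+\sigma^{2})>0$; and $C=\gamma_{2}t^{2}+\gamma_{1}t$ with $\gamma_{2}=P_{\rm B}P_{\rm E}g_{\rm B}g_{\rm E}\ge 0$ and $\gamma_{1}=P_{\rm B}g_{\rm E}\sigma^{2}>0$.

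Next I would differentiate. Writing $g=N(t)/D(t)$ with $N(t)=\alpha t+(\delta+B)$ and $D(t)=\gamma_{2}t^{2}+(\alpha+\gamma_{1})t+\delta>0$, the sign of $g'(t)$ equals the sign of $\alpha D(t)-N(t)D'(t)$. Expanding this polynomial and cancelling the common $\alpha(\alpha+\gamma_{1})t$ and $\alpha\delta$ terms collapses it to
\[
\alpha D(t)-N(t)D'(t)=-\alpha\gamma_{2}t^{2}-2\gamma_{2}(\delta+B)t-\gamma_{1}\delta-\alpha B-\gamma_{1}B,
\]
every term of which is nonpositive and whose constant part $-\gamma_{1}\delta-\alpha B-\gamma_{1}B$ is strictly negative. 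Hence $g'(t)<0$ on $(0,1]$, so $R^{\rm Sec}$ is strictly decreasing in $\rho$. As a sanity check, setting $P_{\rm E}=0$ recovers $R^{\rm Sec}_{\rm w/o}$ in \eqref{Eq1:con_secrecyrate}, which is manifestly decreasing in $\rho^{2}$.

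With monotonicity established, the two bullet insights follow by composing with Lemma \ref{Le:corre}. For the angle claim I would invoke $\rho=G(\beta_{1},\beta_{2})$ and observe that, because the Fresnel integrals $C(\cdot)$ and $S(\cdot)$ are odd functions, both $\tilde{C}(\beta_{1},\beta_{2})$ and $\tilde{S}(\beta_{1},\beta_{2})$ in \eqref{Eq:correlation_fre} are even in $\beta_{1}$; hence $G$ is even in $\beta_{1}$, and therefore in $\theta_{\rm B}-\theta_{\rm E}$, which yields the stated symmetry. Evaluating at $\theta_{\rm B}=\theta_{\rm E}$ (i.e.\ $\beta_{1}=0$) locates the peak of $\rho$, which by the monotonicity just proved is the trough of $R^{\rm Sec}$, and the decay of $G$ away from $\beta_{1}=0$ transfers to growth of the secrecy rate in $|\theta_{\rm B}-\theta_{\rm E}|$. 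For the distance bullet I would only assert the qualitative statement the paper makes: for $\theta_{\rm B}=\theta_{\rm E}$ the envelope of $G(0,\beta_{2})$ decreases in $\beta_{2}$ \cite{zhang2023mixed}, hence in the mismatch $\big|\tfrac{1}{r_{\rm B}}-\tfrac{1}{r_{\rm E}}\big|$, but since changing $r_{\rm B}$ simultaneously alters $g_{\rm B}$ (and the feasibility of the secure condition) the net dependence on distance is non-monotone and is left to the numerical study.

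There is no serious obstacle here. The only care needed is the bookkeeping in the derivative expansion — and, more importantly, being explicit that the fully rigorous content is the monotonicity of $R^{\rm Sec}$ in $\rho$, whereas the ``decreases from both sides in the angle difference'' and the distance statements are only as strong as the symmetry and envelope-monotonicity facts borrowed from Lemma \ref{Le:corre} and \cite{zhang2023mixed}.
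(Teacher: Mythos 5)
Your proposal is correct and takes essentially the same route the paper intends: the paper's proof is just a pointer to Lemma \ref{Le:opt_pow}, i.e., a sign analysis of the derivative of the ratio $\tfrac{A+B}{A+C}$, and your substitution $t=\rho^{2}$ with the expansion $\alpha D(t)-N(t)D'(t)=-\alpha\gamma_{2}t^{2}-2\gamma_{2}(\delta+B)t-\gamma_{1}\delta-(\alpha+\gamma_{1})B<0$ is exactly that argument carried out explicitly, while your treatment of the two bullets matches the paper's appeal to Lemma \ref{Le:corre} and \cite{zhang2023mixed}. In effect you supply the bookkeeping the paper omits; there is no gap.
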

\begin{proof}
   The proof is similar to that of Lemma \ref{Le:opt_pow}, and thus is omitted here for brevity.
\end{proof}
\begin{example}[How do the angle and distance affect?] \emph{To comprehensively demonstrate the effect of AN on near-field PLS, we further provide a concrete example in Fig. \ref{Fig:NF_angle_dist}, illustrating the effects of the spatial angle and distance of the legitimate user on the achievable secrecy rate. Specifically, the system setting is depicted in Fig. \ref{fig:NF_angle_dist_SM}, and the secrecy rates versus the legitimate user angle and distance are plotted in Figs. \ref{fig:NF_angle} and \ref{fig:NF_dist}, respectively, with the derived optimal power allocation in Lemma \ref{Le:opt_pow}. First, it is observed from Fig. \ref{fig:NF_angle} that the secrecy rate is symmetric with respect to the angle difference. The secrecy rate tends to degrade with the decrease of angle difference and reaches the minimum value when the angle difference is zero, which is in accordance with Proposition \ref{apppro1}. An interesting observation is that the secrecy rate of the case aided by AN is smooth, while the secrecy rate of the case without AN appears to fluctuate. Moreover, as the angle difference increases, the fluctuations decrease. This is because the injection of AN can efficiently suppress the information leakage. In contrast, in the case without AN, the information is leaked to the eavesdropper to some extent due to the fluctuating correlation between them. Moreover, we see from Fig. \ref{fig:NF_dist} that AN is capable of guaranteeing the whole-distance secure transmission, while the scheme without AN fails to provide security gain. Additionally, it is observed that the secrecy rate first increases and then decreases with the legitimate user distance, rendering its non-linearity in the distance domain. This is consistent with the results in Proposition \ref{apppro1}.}
\end{example}

\vspace{-9pt}
\section{Low-Complexity Approach for Problem (P1)}\label{Se:Low-com}
The above analytical results are derived under the premise of analog beam focusing-based PLS with a single legitimate user. In the following, we investigate the more general case with multiple legitimate users. We shall first present the conventional approach and then propose a novel low-complexity approach for near-field secure transmission. 

\begin{figure*}[t]
	\centering
	\subfigure[System setting.]{\includegraphics[width=3.8cm]{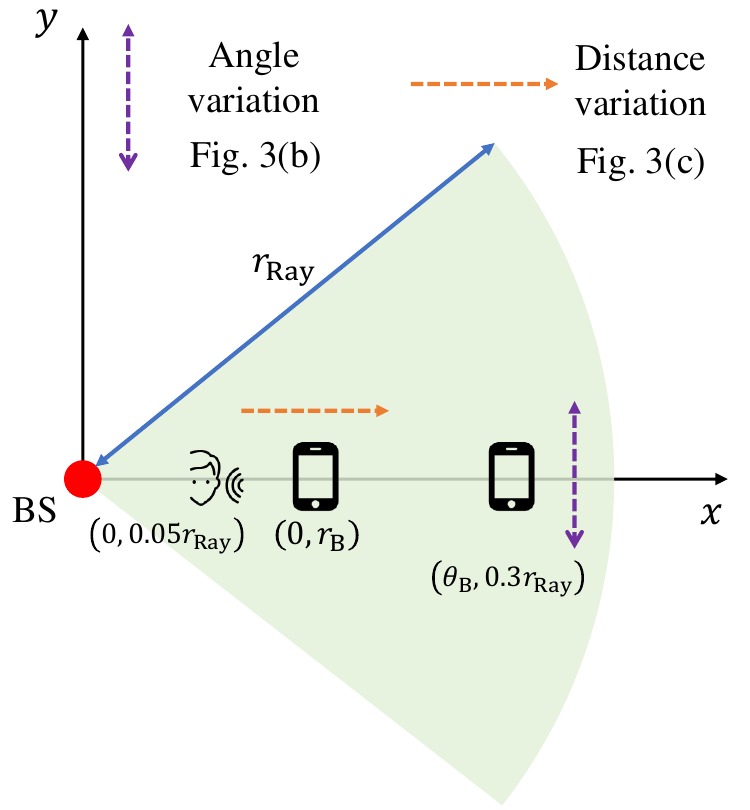}\label{fig:NF_angle_dist_SM}}	
	\subfigure[Secrecy rate vs. legitimate user angle.]{\includegraphics[width=5.35cm]{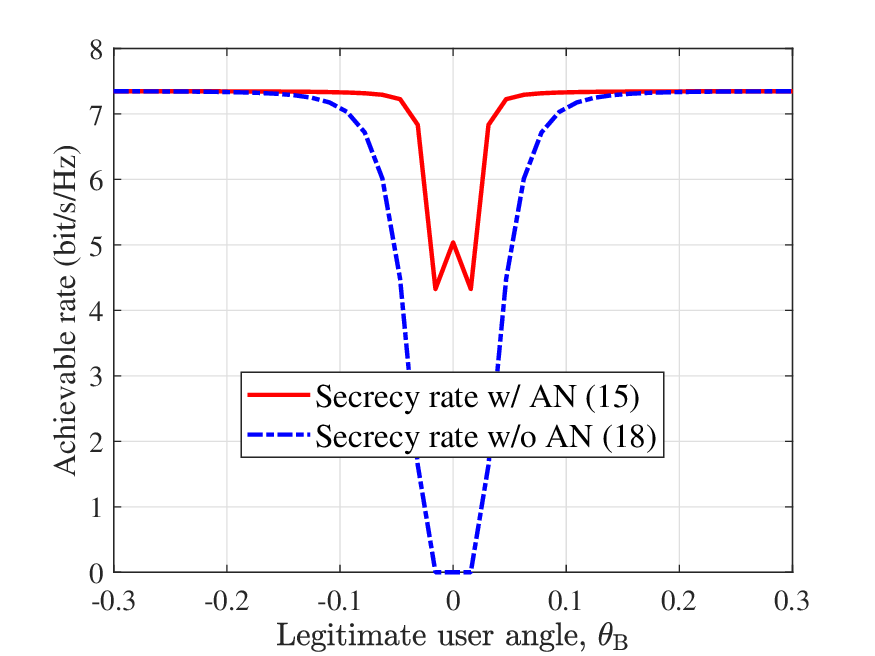}\label{fig:NF_angle}}
	\subfigure[Secrecy rate vs. legitimate user distance.]	{\includegraphics[width=5.35cm]{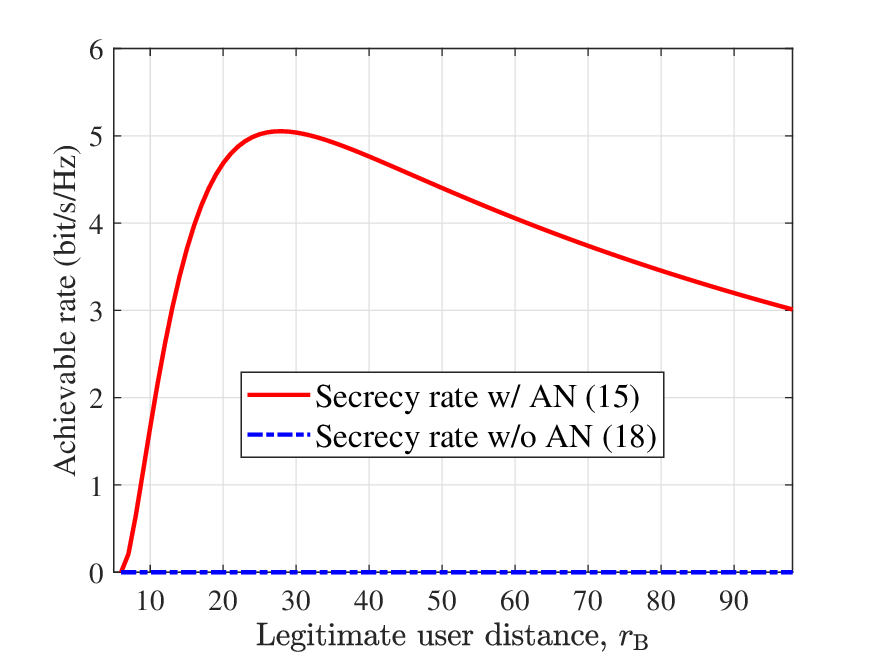}\label{fig:NF_dist}}
	\caption{: Illustration of the secrecy rate versus the angle and distance of the legitimate user.}\label{Fig:NF_angle_dist}
 \vspace{-0.5cm}
\end{figure*}
 \vspace{-8pt}
\subsection{Conventional Approach}\label{Subsec:con}
Problem (P1) is shown to be non-convex; thus it is hard to obtain the corresponding optimal solution. To this end, an efficient approach is proposed to obtain a high-quality solution, which is elaborated as:
First, by defining $\mathbf{H}_{{\rm B},k}=\mathbf{h}_{{\rm B},k}\mathbf{h}_{{\rm B},k}^H$, $\mathbf{W}_{k}=\mathbf{w}_{k}\mathbf{w}_{k}^H$, $\mathbf{Z}=\mathbf{z}\mathbf{z}^H$, and $\mathbf{H}_{\rm E}=\mathbf{h}_{{\rm E}}\mathbf{h}_{{\rm E}}^H$, we can equivalently recast problem (P1) as:
\begin{subequations}
	\begin{align}
		({\bf P2}):~~~\min_{\substack{\mathbf{W}_{k}, \mathbf{Z}} }  &~~	f_{\rm obj}=N_1+N_2-D_1-D_2 \label{P2:obj}
		\\
		\text{s.t.}
		&~~ 	\sum_{k \in \mathcal{K}}\mathrm{Tr}(\mathbf{W}_{k}) +\mathrm{Tr}(\mathbf{Z})\le P,\label{P2:pow_cons} \\ 
        &~~\mathbf{W}_{k}\succeq 0,  \forall k,~~\mathbf{Z} \succeq 0, \\
        &~~\mathrm{Rank}(\mathbf{W}_{k}) \le 1, ~~ \forall k,\label{P2:rank_cons} \\
        &~~\mathrm{Rank}(\mathbf{Z}) \le 1,\label{P2:rank_cons1} 
	\end{align}
 \end{subequations}
where 
\begin{align}\label{Eq:Deriv_Coe}
        N_1  &=-\sum_{k\in\mathcal{K}}\log_2\left(\sum_{i\in\mathcal{K}}\mathrm{Tr}(\mathbf{H}_{{\rm B},k}\mathbf{W}_{i})+\mathrm{Tr} (\mathbf{H}_{{\rm B},k}\mathbf{Z})+\sigma^2_{{\rm B},k}\right), \nn\\
        N_2 &= -K\log_2\left(\mathrm{Tr}(\mathbf{H}_{\rm E}\mathbf{Z})+\sigma^2_{\rm E}\right),\nn\\
        D_1 & = -\sum_{k\in\mathcal{K}}\log_2\!\left(\!\sum_{i\in\mathcal{K}\setminus\{k\}}\!\!\!\mathrm{Tr}(\mathbf{H}_{{\rm B},k}\mathbf{W}_{i})\!+\!\mathrm{Tr} (\mathbf{H}_{{\rm B},k}\mathbf{Z})\!+\!\sigma^2_{{\rm B},k}\!\right),\nn\\
        D_2 &= -\sum_{k\in\mathcal{K}}\log_2\left(\mathrm{Tr}(\mathbf{H}_{\rm E}\mathbf{W}_{k})+\mathrm{Tr}(\mathbf{H}_{\rm E}\mathbf{Z})+\sigma^2_{\rm E}\right).
\end{align}
The main difficulties in solving the problem (P2) arise from the non-convex objective function \eqref{P2:obj} and rank constraints \eqref{P2:rank_cons} and \eqref{P2:rank_cons1}, the solutions to which are described as follows. First, the rank-one constraints can be removed by invoking the SDR technique. While the non-convex objective function is replaced by the convex upper bound derived from the SCA method. More specifically, take $D_1$ as an example. Its global underestimators 
at any feasible point $(\mathbf{W}^{(t)},\mathbf{Z}^{(t)})$ can be constructed by deriving the corresponding first-order Taylor approximation, given by
\begin{align}
    D_1(\mathbf{W},\mathbf{Z})&\ge D_1(\mathbf{W}^{(t)},\mathbf{Z}^{(t)})\nn\\
    &+\mathrm{Tr}(\nabla^H_{\mathbf{W}}D_1(\mathbf{W}^{(t)},\mathbf{Z}^{(t)})(\mathbf{W}-\mathbf{W}^{(t)}))\nn\\
    &+\mathrm{Tr}(\nabla^H_{\mathbf{Z}}D_1(\mathbf{W}^{(t)},\mathbf{Z}^{(t)})(\mathbf{Z}-\mathbf{Z}^{(t)})),
\end{align}
where the superscript $(t)$ denotes the iteration index of the associated optimization variable. As such, problem (P2) now becomes convex with respect to all optimization variables and thus can be efficiently solved by standard convex solvers, e.g., CVX. However, the conventional approach entails extremely high computational complexity, i.e., $\mathcal{O}((K+1)N^{6.5})$, especially in the considered XL-array systems, hence hindering its practical implementation. To tackle the above issue, we propose a novel low-complexity approach.

 \vspace{-9pt}
\subsection{Low-Complexity Approach}
Recall that the secrecy performance of the considered system is significantly affected by two aspects: \emph{information leakage} and \emph{inter-user interference}. These factors, however, are considered to be in conflict with each other due to the intrinsic trade-off between maximizing the sum-rate and minimizing the information leakage. In order to strike a good balance, the low-complexity approach should be properly designed to make the interference between legitimate users as small as possible, while at the same time reducing the information leakage as much as possible. To account for these effects, the proposed low-complexity approach is composed of three components: namely \emph{interference domain characterization}, \emph{interference set construction}, and \emph{beamformer determination}, the detailed procedures of which are elucidated below. 

%


\subsubsection{Interference Domain Characterization}
To characterize the interference domain, we first provide the following key propositions, namely beam width and beam depth.

\begin{proposition}[$\varphi$-{\rm dB} Beam width]\label{Pro:BW}
	\emph{For a legitimate user $k\in \mathcal{K}$, residing at $(\theta_{{\rm B},k},r_{{\rm B},k})$, the $\varphi$-{\rm dB} beam width characterizes the spatial angular width on the \emph{distance ring}, i.e.,  $\{(\theta,r)|\frac{1-\theta^2}{r}=\frac{1-\theta_{{\rm B},k}^2}{r_{{\rm B},k}}\}$ \cite{9693928}, for which the normalized beam gain at an arbitrary observed angle-distance pair $(\theta,r)$ is above $10^{-\frac{\varphi}{20}}\in(0,1]$,
	i.e.,
		\begin{align}
			\text{BW}_{{\rm \varphi dB},k} &= 
			\frac{2 \bar{\beta}_{\varphi}\lambda r_{{\rm B},k}}{Nd},
	\end{align}
where $\sinc(x)=\sin(\pi x)/(\pi x)$ and  $\sinc(\bar{\beta}_{\rm \varphi})={10^{-\frac{\varphi}{20}}}$.}
\end{proposition}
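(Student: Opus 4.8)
The plan is to identify the normalized beam gain of the MRT beamformer focused on $(\theta_{{\rm B},k},r_{{\rm B},k})$ with the steering-vector correlation $\rho(\cdot)$ introduced earlier, and then to solve the level-set equation $\rho = 10^{-\varphi/20}$ after restricting the observation point to the distance ring. Concretely, with MRT beamforming $\mathbf{w}_{k}\propto\mathbf{b}(\theta_{{\rm B},k},r_{{\rm B},k})$ the power delivered to a location with steering vector $\mathbf{b}(\theta,r)$ is proportional to $\rho^{2}(\theta_{{\rm B},k},\theta,r_{{\rm B},k},r)$; normalizing by the peak (attained at the focal point, where $\rho=1$), the normalized beam amplitude gain equals $\rho(\theta_{{\rm B},k},\theta,r_{{\rm B},k},r)$, so the $\varphi$-dB region is $\{(\theta,r):\rho\ge 10^{-\varphi/20}\}$.

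Next I would restrict attention to the distance ring $\frac{1-\theta^{2}}{r}=\frac{1-\theta_{{\rm B},k}^{2}}{r_{{\rm B},k}}$ and expand $\rho$ with the usual second-order (Fresnel) approximation of the element distances, $r^{(n)}-r\approx-\theta\delta_{n}d+\tfrac{(1-\theta^{2})\delta_{n}^{2}d^{2}}{2r}$. This writes $\rho$ as the magnitude of an average of $N$ unit-modulus phasors whose $n$-th phase is $-\tfrac{2\pi}{\lambda}(\theta-\theta_{{\rm B},k})\delta_{n}d+\tfrac{\pi\delta_{n}^{2}d^{2}}{\lambda}\bigl(\tfrac{1-\theta^{2}}{r}-\tfrac{1-\theta_{{\rm B},k}^{2}}{r_{{\rm B},k}}\bigr)$. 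On the distance ring the quadratic, distance-dependent term cancels identically, so $\rho$ collapses to a Dirichlet kernel, $\rho=\tfrac1N\bigl|\sin(\tfrac{N\pi d(\theta-\theta_{{\rm B},k})}{\lambda})\big/\sin(\tfrac{\pi d(\theta-\theta_{{\rm B},k})}{\lambda})\bigr|$; for large $N$ and small angular separation (the regime of the main lobe) the replacement $\sin x\approx x$ in the denominator gives $\rho\approx\bigl|\sinc(Nd(\theta-\theta_{{\rm B},k})/\lambda)\bigr|$ with $\sinc(x)=\sin(\pi x)/(\pi x)$.

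Finally, since $\sinc$ is even and strictly decreasing on its main lobe, defining $\bar{\beta}_{\varphi}$ by $\sinc(\bar{\beta}_{\varphi})=10^{-\varphi/20}$ converts the inequality to $|\theta-\theta_{{\rm B},k}|\le\bar{\beta}_{\varphi}\lambda/(Nd)$, i.e.\ a half-width $\bar{\beta}_{\varphi}\lambda/(Nd)$ in the spatial-angle variable. Mapping this to the physical extent along the distance ring — in the paraxial regime a point $(\theta,r)$ on the ring has transverse coordinate $\approx r_{{\rm B},k}\theta$, so an increment in $\theta$ corresponds to an $r_{{\rm B},k}$-times larger length increment — and doubling to count both sides of the focal point yields the stated beamwidth $2\bar{\beta}_{\varphi}\lambda r_{{\rm B},k}/(Nd)$.

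The step I expect to be the main obstacle is the middle one: one must argue carefully that the near-field correlation degenerates to precisely the far-field Dirichlet/sinc kernel on the distance ring — note that the Fresnel closed form of Lemma~\ref{Le:corre} is itself singular there, since $\beta_{2}\to0$ and $\beta_{1}\to\infty$ — and then control the error of both the large-$N$ sinc approximation and the paraxial coordinate conversion throughout the main-lobe region where the defining inequality is active, ideally by taking an appropriate limit in Lemma~\ref{Le:corre} rather than re-deriving everything.
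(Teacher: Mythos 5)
Your derivation is correct, and it fills in exactly what the paper leaves out: the paper's own "proof" is just a pointer to \cite{10273772}, whose standard argument is the one you reconstruct — on the distance ring the quadratic (Fresnel) phase term cancels, the correlation collapses to the Dirichlet kernel and hence the $\sinc$ main lobe, the level condition $\sinc(\bar{\beta}_{\varphi})=10^{-\varphi/20}$ gives the angular half-width $\bar{\beta}_{\varphi}\lambda/(Nd)$, and the paraxial factor $r_{{\rm B},k}$ converts it to the physical width $2\bar{\beta}_{\varphi}\lambda r_{{\rm B},k}/(Nd)$. Your remark that one should work directly with the phasor sum rather than take the degenerate $\beta_{2}\to 0$ limit of Lemma~\ref{Le:corre} is well taken and is indeed how the cited derivation avoids that singularity.
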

\begin{proof}
	The proof is similar to that of \cite{10273772}, and thus omitted for brevity.
\end{proof}
Next, the $\varphi$-{\rm dB} beam depth is characterized as follows.

\begin{proposition}[$\varphi$-{\rm dB} beam depth]\label{Pro:BD}
	\emph{For a legitimate user located at $(\theta_{{\rm B},k},r_{{\rm B},k})$, the $\varphi$-{\rm dB} beam depth characterizes the distance interval along the spatial angle $\theta_{{\rm B},k}$, within which the normalized beam gain is larger than $10^{-\frac{\varphi}{20}}\in(0,1]$.
	Mathematically, $\text{BD}_{{\rm \varphi dB},k}$ is given by
\begin{equation}\label{Eq:BD}
	\text{BD}_{{\rm \varphi dB},k} \triangleq r_{{\rm R},k}-r_{{\rm L},k} =\begin{cases}
		\frac{2r_{{\rm B},k}^2r_{\rm BD}}{r^2_{\rm BD}-r_{{\rm B},k}^2}, & r_{{\rm B},k} < r_{\rm BD},\\
		\infty,  &r_{{\rm B},k}  \ge r_{\rm BD},
	\end{cases}
\end{equation}
	where $r_{{\rm L},k}=\frac{r_{{\rm B},k}r_{\rm BD}}{r_{\rm BD}+r_{{\rm B},k}}$ and $r_{{\rm R},k}=\frac{r_{{\rm B},k}r_{\rm BD}}{r_{\rm BD}-r_{{\rm B},k}}$ denote the left and right boundaries of the beam-depth interval, respectively. $r_{\rm BD}=\frac{N^2d^2(1-\theta_{{\rm B},k}^2)}{2\lambda\beta_{\varphi}^2}$ and $G(\beta_{\varphi})={10^{-\frac{\varphi}{20}}}$.}
	\end{proposition}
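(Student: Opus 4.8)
\textbf{Proof proposal for Proposition \ref{Pro:BD} ($\varphi$-{\rm dB} beam depth).}

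The plan is to characterize the distance interval along the fixed spatial angle $\theta_{{\rm B},k}$ over which the normalized beam gain stays above the threshold $10^{-\frac{\varphi}{20}}$, and to do this by reusing the closed-form correlation approximation from Lemma \ref{Le:corre}. First I would fix the observation angle at $\theta = \theta_{{\rm B},k}$ and let $r$ be the free variable, so that the normalized beam gain at $(\theta_{{\rm B},k}, r)$ equals $\rho(\theta_{{\rm B},k}, r_{{\rm B},k}, \theta_{{\rm B},k}, r)$. Plugging $\theta_i = \theta_j = \theta_{{\rm B},k}$ into \eqref{beta1beta2}, the parameter $\beta_1$ vanishes (its numerator $\theta_j - \theta_i = 0$), so the correlation collapses to the one-variable function $G(\beta_\varphi)$ with $\beta_\varphi = \frac{N}{2}\sqrt{d\left|\frac{1-\theta_{{\rm B},k}^2}{r_{{\rm B},k}} - \frac{1-\theta_{{\rm B},k}^2}{r}\right|}$; note that $G(0)=1$ and $G$ is (in envelope) monotonically decreasing in its argument on the relevant range, a fact I would cite from \cite{zhang2023mixed} exactly as Proposition \ref{apppro1} does.

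Next I would solve the threshold equation $G(\beta_\varphi) = 10^{-\frac{\varphi}{20}}$ for the critical value of $\beta_\varphi$, which by definition is the $\beta_\varphi$ satisfying $G(\beta_\varphi) = 10^{-\frac{\varphi}{20}}$ — I would simply name this critical value and keep it symbolic. The beam-depth interval is then the set of $r$ for which $\frac{N^2 d^2 (1-\theta_{{\rm B},k}^2)}{4}\left|\frac{1}{r_{{\rm B},k}} - \frac{1}{r}\right| \le \beta_\varphi^2$, i.e. $\left|\frac{1}{r_{{\rm B},k}} - \frac{1}{r}\right| \le \frac{1}{r_{\rm BD}}$ where I define $r_{\rm BD} = \frac{N^2 d^2 (1-\theta_{{\rm B},k}^2)}{2\lambda \beta_\varphi^2}$ after substituting $d = \lambda/2$ to clear one factor. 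Writing this double inequality as $\frac{1}{r_{{\rm B},k}} - \frac{1}{r_{\rm BD}} \le \frac{1}{r} \le \frac{1}{r_{{\rm B},k}} + \frac{1}{r_{\rm BD}}$ and inverting, the right endpoint in $r$ comes from the smaller reciprocal bound, giving $r_{{\rm R},k} = \frac{r_{{\rm B},k} r_{\rm BD}}{r_{\rm BD} - r_{{\rm B},k}}$, and the left endpoint from the larger reciprocal bound, giving $r_{{\rm L},k} = \frac{r_{{\rm B},k} r_{\rm BD}}{r_{\rm BD} + r_{{\rm B},k}}$. Subtracting and simplifying over a common denominator yields $\text{BD}_{{\rm \varphi dB},k} = r_{{\rm R},k} - r_{{\rm L},k} = \frac{2 r_{{\rm B},k}^2 r_{\rm BD}}{r_{\rm BD}^2 - r_{{\rm B},k}^2}$, which is the claimed expression.

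Finally I would handle the dichotomy in \eqref{Eq:BD}: when $r_{{\rm B},k} \ge r_{\rm BD}$, the lower reciprocal bound $\frac{1}{r_{{\rm B},k}} - \frac{1}{r_{\rm BD}}$ becomes non-positive, so the constraint $\frac{1}{r} \le \frac{1}{r_{{\rm B},k}} + \frac{1}{r_{\rm BD}}$ is the only active one and is satisfied for all $r$ up to infinity (equivalently $r_{{\rm R},k}$ is non-positive, i.e. unbounded), whence the beam depth is $\infty$; when $r_{{\rm B},k} < r_{\rm BD}$ both endpoints are finite and positive and the formula above applies. I expect the main obstacle to be a careful justification that $G$ is monotone (so that the sublevel set is genuinely an interval rather than a union of sidelobe-induced pieces); in keeping with the paper's style I would invoke the envelope-monotonicity result of \cite{zhang2023mixed} and treat the first sidelobe null as the operative boundary, exactly as \cite{10273772} does, so that the threshold crossing is unique on the main lobe. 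The remaining algebra — inverting the reciprocal inequalities and combining fractions — is routine.
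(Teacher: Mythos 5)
Your derivation is correct and follows exactly the route the paper intends: the paper omits this proof (deferring to \cite{zhao2024performance}), but specializing Lemma~\ref{Le:corre} to $\theta=\theta_{{\rm B},k}$ so that $\beta_1=0$, thresholding $\beta_2$ at $\beta_\varphi$ with $G(\beta_\varphi)=10^{-\frac{\varphi}{20}}$, inverting the reciprocal-distance inequality, and treating the $r_{{\rm B},k}\ge r_{\rm BD}$ case separately is precisely the standard argument and reproduces \eqref{Eq:BD}. One cosmetic slip: since $\beta_2^2=\frac{N^2}{4}\,d\left|\frac{1-\theta_{{\rm B},k}^2}{r_{{\rm B},k}}-\frac{1-\theta_{{\rm B},k}^2}{r}\right|$, the intermediate inequality should carry a single factor of $d$ (not $d^2$); with $d=\lambda/2$ this correctly yields $\left|\frac{1}{r_{{\rm B},k}}-\frac{1}{r}\right|\le\frac{1}{r_{\rm BD}}$ for the paper's $r_{\rm BD}=\frac{N^2d^2(1-\theta_{{\rm B},k}^2)}{2\lambda\beta_{\varphi}^2}$, so the final result is unaffected.
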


\begin{proof}
	The proof is similar to that of \cite{zhao2024performance}, and thus omitted for brevity.
\end{proof}
\begin{remark}[What affects the beam depth?] 
	\emph{Proposition \ref{Pro:BD} indicates that the beam depth is considerably affected by the boundary $r_{\rm BD}$, which is determined by the antenna size and the spatial angle.  Note that $r_{\rm BD}$ is irrespective of the user distance, thus serving as a boundary when the spatial angle is fixed.  Moreover, it can be verified that the beam depth is monotonically increasing as the user distance approaches the boundary, i.e., $r_{{\rm B},k}\to r_{\rm BD}$. 
	}
\end{remark}


The $\varphi$-{\rm dB} beam width and beam depth in Propositions \ref{Pro:BW} and \ref{Pro:BD} jointly characterize the surrounding area of a typical user within which the beam gain of this user is less than a predefined $\varphi$-{\rm dB} threshold. In other words, if a beam steered towards another user falls into that user's surrounding area, it will cause no less than $\varphi$-{\rm dB} interference to that user. Therefore, the surrounding area can be regarded as the interference domain of a typical user, which is mathematically characterized as follows. 

\begin{proposition}[$\varphi$-{\rm dB} interference domain]
	\emph{For a legitimate user $k \in \mathcal{K}$, its $\varphi$-{\rm dB} interference domain, denoted as $\mathcal{A}_{k}$, is approximated as a rectangular region formed by the corresponding beam width and beam depth, i.e., $\text{BW}_{{\rm \varphi dB},k}$ and $\text{BD}_{{\rm \varphi dB},k}$, given by
	\begin{align}
		\mathcal{A}_{k} =\Big\{(x,y)|&\sqrt{[(x-r_{{\rm B},k}({1-\theta^2_{{\rm B},k}})^{\frac{1}{2}})^2+(y-r_{{\rm B},k}\theta_{{\rm B},k})^2]} \nn\\
		&\le\sqrt{[(\text{BD}_{{\rm \varphi dB},k}/2)^2+(\text{BW}_{{\rm \varphi dB},k}/2)^2]}, \nn\\ &|x-r_{{\rm B},k}({1-\theta^2_{{\rm B},k}})^{\frac{1}{2}}|\le \text{BD}_{{\rm \varphi dB},k}/2,\nn\\
		&|y-r_{{\rm B},k}\theta_{{\rm B},k}|\le \text{BW}_{{\rm \varphi dB},k}/2\Big\}.
		\end{align}}
\end{proposition}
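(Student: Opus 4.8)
The plan is to obtain $\mathcal{A}_k$ as the $\varphi$-{\rm dB} super-level set of the normalized beam gain produced by the matched (beam-focusing) beamformer $\mathbf{b}(\theta_{{\rm B},k},r_{{\rm B},k})$, and then to argue that this super-level set is well approximated by the stated rectangle. First I would write the normalized beam gain seen by an observer at an arbitrary angle-distance pair $(\theta,r)$ as $G_k(\theta,r)=|\mathbf{b}^H(\theta,r)\mathbf{b}(\theta_{{\rm B},k},r_{{\rm B},k})|=\rho(\theta,\theta_{{\rm B},k},r,r_{{\rm B},k})$, and invoke Lemma \ref{Le:corre} to replace it by the Fresnel-integral expression $G(\beta_1,\beta_2)$, where $\beta_1$ and $\beta_2$ are the functions of $(\theta,r;\theta_{{\rm B},k},r_{{\rm B},k})$ in \eqref{beta1beta2}. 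By the definition of the $\varphi$-{\rm dB} interference domain, $\mathcal{A}_k$ is then $\{(\theta,r):G(\beta_1,\beta_2)\ge 10^{-\varphi/20}\}$, and it remains to show this set is (approximately) the rectangle built from $\text{BW}_{{\rm \varphi dB},k}$ and $\text{BD}_{{\rm \varphi dB},k}$.

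Second, I would decouple this two-dimensional level set into its angular and radial principal sections. Along the \emph{distance ring} through user $k$, i.e.\ $\frac{1-\theta^2}{r}=\frac{1-\theta_{{\rm B},k}^2}{r_{{\rm B},k}}$, the bracketed term in \eqref{beta1beta2} vanishes, so $\beta_2\to 0$ and $G(\beta_1,\beta_2)$ degenerates to a $\sinc$-type profile in the angle offset; its $\varphi$-{\rm dB} crossing is exactly $\sinc(\bar\beta_\varphi)=10^{-\varphi/20}$, which by Proposition \ref{Pro:BW} gives the angular half-width $\text{BW}_{{\rm \varphi dB},k}/2$. Conversely, along the fixed spatial angle $\theta=\theta_{{\rm B},k}$ the angle offset vanishes, $\beta_1\to 0$, and $G$ becomes a pure function of $\beta_2$; setting $G(\beta_\varphi)=10^{-\varphi/20}$ and solving $\frac{1-\theta_{{\rm B},k}^2}{r}-\frac{1-\theta_{{\rm B},k}^2}{r_{{\rm B},k}}$ for $r$ at the boundary value yields the radial boundaries $r_{{\rm L},k},r_{{\rm R},k}$ and hence $\text{BD}_{{\rm \varphi dB},k}$ as in Proposition \ref{Pro:BD}. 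Since $G(\beta_1,\beta_2)$ is, in the Fresnel regime, essentially separable and monotone decreasing in $|\beta_1|$ and in $|\beta_2|$, the full super-level set is sandwiched between the intersection of the two slabs $\{|\beta_1|\le\bar\beta_\varphi\}$ and $\{r\in[r_{{\rm L},k},r_{{\rm R},k}]\}$ and its circumscribed disc; this is precisely the corner-rounded rectangle in the statement.

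Third, I would pass from polar to Cartesian coordinates. The user sits at $(x_k,y_k)=(r_{{\rm B},k}(1-\theta_{{\rm B},k}^2)^{1/2},\,r_{{\rm B},k}\theta_{{\rm B},k})$, and under the small-region assumption $\text{BW}_{{\rm \varphi dB},k},\text{BD}_{{\rm \varphi dB},k}\ll r_{{\rm B},k}$ (valid because both scale sublinearly in $N$ and $r_{{\rm B},k}$ through $\bar\beta_\varphi$ and $\beta_\varphi^2$) the radial direction at the user is locally aligned with the $x$-axis and the distance-ring tangent with the $y$-axis. Hence the angular slab becomes $|y-y_k|\le\text{BW}_{{\rm \varphi dB},k}/2$, the radial slab becomes $|x-x_k|\le\text{BD}_{{\rm \varphi dB},k}/2$, and intersecting these with the enclosing disc of radius $\sqrt{(\text{BD}_{{\rm \varphi dB},k}/2)^2+(\text{BW}_{{\rm \varphi dB},k}/2)^2}$ reproduces $\mathcal{A}_k$ verbatim.

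The main obstacle I expect is rigorously justifying the \emph{decoupling} step: the true $\varphi$-{\rm dB} contour of $G(\beta_1,\beta_2)$ is a smooth closed curve rather than a rectangle, so one must control the error of replacing it by the axis-aligned rectangle — essentially showing $G$ is monotone along rays emanating from $(\beta_1,\beta_2)=(0,0)$ and that the cross-coupling between the angular and radial directions is negligible under the near-field Fresnel approximation of Lemma \ref{Le:corre}. A secondary technical point is the polar-to-Cartesian linearization, which requires the curvature of the distance ring over a span of length $\text{BW}_{{\rm \varphi dB},k}$ to be small relative to $\text{BD}_{{\rm \varphi dB},k}$; this should be stated explicitly as the regime in which ``$\mathcal{A}_k$ is approximated as a rectangular region'' is accurate.
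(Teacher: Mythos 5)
The paper gives no explicit proof of this proposition: it is presented as a construction, obtained by reading the $\varphi$-dB interference domain as the super-level set of the normalized beam gain and assembling it from the beam width and beam depth of Propositions~\ref{Pro:BW} and~\ref{Pro:BD} (see the paragraph immediately preceding the statement). Your sketch formalizes exactly that reasoning — gain $=\rho$, Fresnel approximation from Lemma~\ref{Le:corre}, angular cut along the distance ring yielding $\text{BW}_{{\rm \varphi dB},k}$, radial cut along the fixed spatial angle yielding $\text{BD}_{{\rm \varphi dB},k}$, then the rectangle — so in approach you are aligned with the paper, and you correctly identify the only substantive analytic step (approximate separability/monotonicity of $G(\beta_1,\beta_2)$), which the paper also leaves implicit.

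Two caveats. First, your Cartesian step overreaches: the assumption $\text{BW}_{{\rm \varphi dB},k},\text{BD}_{{\rm \varphi dB},k}\ll r_{{\rm B},k}$ does not make the radial direction at the user align with the $x$-axis; that alignment holds only for $\theta_{{\rm B},k}\approx 0$ (broadside), since the BS–user direction makes the AoD angle with the $x$-axis. For a general spatial angle, the depth slab is oriented along the BS–user direction and the width slab along the distance-ring tangent, so the region your argument naturally produces is a rectangle \emph{rotated} by the user's AoD; the axis-aligned set in the statement is a further approximation (exact at broadside), and your proof should state this rather than attribute the alignment to the small-region assumption. Second, a minor point: the disc in the statement circumscribes the rectangle, so the displayed set is exactly the rectangle (the norm constraint is redundant) and there is no corner rounding to account for. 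Neither caveat changes the overall plan.
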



\subsubsection{Interference Set Construction} 
Based on the characterized interference domain, we can construct the interference set for each user accordingly.
To proceed, we provide the following definition.
 \begin{definition}[Interference set]
	\emph{For a legitimate user $k \in \mathcal{K}$, its interference set, defined as $\Xi_{k}$, consists of the legitimate users that are located within its $\varphi$-{\rm dB} interference domain.}
\end{definition}

In the following, we present two useful propositions for facilitating the interference set construction. The first one is conceived from Proposition \ref{Pro:BD}, where we observe that the beam depth tends to be infinity when the distance exceeds the boundary $r_{\rm BD}$. However, since we consider the near-field secure transmission, the beam-depth interval should also fall within the near-field region. To this end, we below establish a \emph{distance threshold}, denoted by $r_{\rm DT}$, whose right boundary of beam-depth interval exactly lies in the Rayleigh distance.
%

 \begin{proposition}\label{Le:BR}
	\emph{For a legitimate user $k\in \mathcal{K}$, there exists a distance threshold $r_{\rm DT}$ in the distance domain, beyond which the right boundary of the beam-depth interval is greater than the Rayleigh distance, which is given by
		\begin{equation}\label{Eq:BR}
			r_{\rm DT} = \frac{N^2\lambda(1-\theta_{{\rm B},k}^2)}{8\beta_{\varphi}^2+2(1-\theta_{{\rm B},k}^2)}.
	\end{equation}
Specifically, the distance threshold has a maximum value with respect to $\theta_{{\rm B},k}$, which is attained when $\theta_{{\rm B},k}$ is equal to zero, given by
	\begin{align}\label{Eq:maxBR}
	r^{\max}_{\rm DT} = \frac{N^2\lambda}{8\beta_{\varphi}^2+2}.
	\end{align} }
\end{proposition}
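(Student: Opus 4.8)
The plan is to lean directly on Proposition \ref{Pro:BD}, which already gives the right boundary of the beam-depth interval in closed form: $r_{{\rm R},k}=\frac{r_{{\rm B},k}r_{\rm BD}}{r_{\rm BD}-r_{{\rm B},k}}$ for $r_{{\rm B},k}<r_{\rm BD}$, with $r_{\rm BD}=\frac{N^2d^2(1-\theta_{{\rm B},k}^2)}{2\lambda\beta_{\varphi}^2}$. First I would record that, on the interval $(0,r_{\rm BD})$, the map $r_{{\rm B},k}\mapsto r_{{\rm R},k}$ is strictly increasing: a one-line derivative check gives $\partial r_{{\rm R},k}/\partial r_{{\rm B},k}=r_{\rm BD}^2/(r_{\rm BD}-r_{{\rm B},k})^2>0$, and the image is all of $(0,\infty)$ since $r_{{\rm R},k}\to 0$ as $r_{{\rm B},k}\to 0^+$ and $r_{{\rm R},k}\to\infty$ as $r_{{\rm B},k}\to r_{\rm BD}^-$. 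Hence there is a unique distance $r_{\rm DT}\in(0,r_{\rm BD})$ at which $r_{{\rm R},k}=r_{\rm Ray}$, and by monotonicity $r_{{\rm R},k}>r_{\rm Ray}$ precisely when $r_{{\rm B},k}>r_{\rm DT}$; this makes the ``beyond which'' wording rigorous.

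Next I would solve $r_{{\rm R},k}=r_{\rm Ray}$ for $r_{\rm DT}$. Clearing the denominator, $r_{\rm DT}r_{\rm BD}=r_{\rm Ray}(r_{\rm BD}-r_{\rm DT})$, hence $r_{\rm DT}=\dfrac{r_{\rm Ray}\,r_{\rm BD}}{r_{\rm Ray}+r_{\rm BD}}$, which incidentally re-confirms $r_{\rm DT}<r_{\rm BD}$ so that the finite-beam-depth regime of Proposition \ref{Pro:BD} indeed applies. It then remains to substitute the array parameters: $d=\lambda/2$, so $r_{\rm BD}=\frac{N^2\lambda(1-\theta_{{\rm B},k}^2)}{8\beta_{\varphi}^2}$; and $r_{\rm Ray}=2D^2/\lambda$ with $D=(N-1)d\approx Nd$, so $r_{\rm Ray}\approx N^2\lambda/2$. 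Plugging these into $r_{\rm DT}=r_{\rm Ray}r_{\rm BD}/(r_{\rm Ray}+r_{\rm BD})$, the common factor $N^2\lambda$ cancels and the expression collapses to $r_{\rm DT}=\frac{N^2\lambda(1-\theta_{{\rm B},k}^2)}{8\beta_{\varphi}^2+2(1-\theta_{{\rm B},k}^2)}$, which is \eqref{Eq:BR}.

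For the maximizer, I would regard $r_{\rm DT}$ as a function of $u:=1-\theta_{{\rm B},k}^2\in(0,1]$, i.e. $r_{\rm DT}(u)=\frac{N^2\lambda u}{8\beta_{\varphi}^2+2u}$, and differentiate: $\frac{d r_{\rm DT}}{du}=\frac{8\beta_{\varphi}^2N^2\lambda}{(8\beta_{\varphi}^2+2u)^2}>0$. Thus $r_{\rm DT}$ is strictly increasing in $u$, hence maximized at $u=1$, i.e. $\theta_{{\rm B},k}=0$, which yields $r^{\max}_{\rm DT}=\frac{N^2\lambda}{8\beta_{\varphi}^2+2}$, matching \eqref{Eq:maxBR}.

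I do not anticipate a genuine obstacle: once Proposition \ref{Pro:BD} is granted, the argument is essentially algebraic. The only two points needing care are (i) justifying the ``beyond which'' phrasing via the monotonicity of $r_{{\rm R},k}$ in $r_{{\rm B},k}$, and (ii) being explicit about the large-aperture approximation $D\approx Nd$ that turns $r_{\rm Ray}$ into $N^2\lambda/2$, since the stated closed form depends on it; keeping $D=(N-1)d$ exactly would carry the same steps through with $N^2$ replaced by $(N-1)^2$ in the relevant spots.
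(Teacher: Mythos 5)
Your proposal is correct and follows essentially the same route as the paper's proof: set the right beam-depth boundary $\frac{r_{\rm DT}r_{\rm BD}}{r_{\rm BD}-r_{\rm DT}}$ equal to $r_{\rm Ray}$, substitute $d=\lambda/2$ and $r_{\rm Ray}\approx N^2\lambda/2$, and observe the maximum occurs at $\theta_{{\rm B},k}=0$. Your additional checks --- the monotonicity of $r_{{\rm R},k}$ in $r_{{\rm B},k}$ justifying the ``beyond which'' claim, and the explicit $D\approx Nd$ approximation --- are sound refinements the paper leaves implicit.
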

\begin{proof}
Please refer to Appendix \ref{App4}.
\end{proof}
Next, we present another useful proposition below.
\begin{proposition}[Interference reciprocity]\label{Pro:two_set}
	\emph{For any two legitimate users $i,j \in \mathcal{K}$, if $j$ is in user $i$'s interference set, then user $i$ shall also be included in the interfering set of user $j$, i.e., 
		\begin{equation}
			j \in \Xi_{i}	\Longleftrightarrow i \in \Xi_{j},~~ \forall i,j \in \mathcal{K}.
	\end{equation}}
\end{proposition}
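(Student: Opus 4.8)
\emph{Proof proposal.} The plan is to reduce the statement to the symmetry of the steering-vector correlation $\rho$. First I would note that, by construction (Propositions~\ref{Pro:BW} and~\ref{Pro:BD}), the $\varphi$-{\rm dB} interference domain $\mathcal{A}_i$ of user $i$ is the set of angle--distance pairs $(\theta,r)$ at which a beam focused on $(\theta_{{\rm B},i},r_{{\rm B},i})$ retains normalized gain at least $10^{-\frac{\varphi}{20}}$; since that normalized gain equals $|\mathbf{b}^{H}(\theta_{{\rm B},i},r_{{\rm B},i})\mathbf{b}(\theta,r)|=\rho(\theta_{{\rm B},i},r_{{\rm B},i},\theta,r)$, membership $j\in\Xi_i$ is equivalent to $\rho(\theta_{{\rm B},i},r_{{\rm B},i},\theta_{{\rm B},j},r_{{\rm B},j})\ge 10^{-\frac{\varphi}{20}}$. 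Invoking the symmetry property $\rho(\theta_i,r_i,\theta_j,r_j)=\rho(\theta_j,r_j,\theta_i,r_i)$, this is in turn equivalent to $\rho(\theta_{{\rm B},j},r_{{\rm B},j},\theta_{{\rm B},i},r_{{\rm B},i})\ge 10^{-\frac{\varphi}{20}}$, i.e., to $i\in\Xi_j$, so only one direction of the claimed equivalence actually needs to be written out.

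To keep this consistent with the \emph{rectangular approximation} used to define $\mathcal{A}_k$, I would re-express the two box constraints---one measured on the distance ring and governed by $\text{BW}_{{\rm \varphi dB},i}$, the other measured along the spatial angle $\theta_{{\rm B},i}$ and governed by $\text{BD}_{{\rm \varphi dB},i}$---through the variables $\beta_1,\beta_2$ of Lemma~\ref{Le:corre}. The decisive observation is that $\beta_1,\beta_2$ depend on the pair $(i,j)$ only through $|\theta_{{\rm B},i}-\theta_{{\rm B},j}|$ (recall $G$ is even in $\beta_1$) and $\bigl|\tfrac{1-\theta_{{\rm B},i}^{2}}{r_{{\rm B},i}}-\tfrac{1-\theta_{{\rm B},j}^{2}}{r_{{\rm B},j}}\bigr|$, both invariant under $i\leftrightarrow j$, while the thresholds fixing the beam width ($\sinc(\bar\beta_\varphi)=10^{-\frac{\varphi}{20}}$) and the beam depth ($G(\beta_{\varphi})=10^{-\frac{\varphi}{20}}$) are user-independent; hence the pair of conditions defining ``$j\in\mathcal{A}_i$'' is symmetric, which yields the reciprocity.

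The main obstacle is precisely this reconciliation step. The rectangle attached to user $i$ has half-dimensions $\text{BW}_{{\rm \varphi dB},i}/2\propto r_{{\rm B},i}$ and $\text{BD}_{{\rm \varphi dB},i}/2$---the latter even diverging once $r_{{\rm B},i}\ge r_{\rm BD}$---so a literal Cartesian reading compares users $i$ and $j$ against a priori different boxes and does not visibly respect the $i\leftrightarrow j$ symmetry. I would resolve this by noting that two users lying in each other's domains are necessarily at comparable ranges, $r_{{\rm B},i}\approx r_{{\rm B},j}$ (the beam depth being small relative to the operating distances, and in the unbounded case truncated at the Rayleigh distance for both users via the threshold $r_{\rm DT}$ of Proposition~\ref{Le:BR} when $\theta_{{\rm B},i}\approx\theta_{{\rm B},j}$), so that $\text{BW}_{{\rm \varphi dB},i}\approx\text{BW}_{{\rm \varphi dB},j}$ and $\text{BD}_{{\rm \varphi dB},i}\approx\text{BD}_{{\rm \varphi dB},j}$; the residual asymmetry is then of the same order as the error already incurred by replacing the exact level set of $\rho$ by a rectangle. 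A cleaner alternative, which I would present as the principal line of argument, is to take $\mathcal{A}_k=\{(\theta,r):\rho(\theta_{{\rm B},k},r_{{\rm B},k},\theta,r)\ge 10^{-\frac{\varphi}{20}}\}$ as the operative definition and apply the symmetry of $\rho$ directly.
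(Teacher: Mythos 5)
Your principal line of argument proves a different statement from the one in the paper. Proposition \ref{Pro:two_set} is about the interference set $\Xi_k$ built from the \emph{rectangular} $\varphi$-dB domain $\mathcal{A}_k$ of Propositions \ref{Pro:BW} and \ref{Pro:BD}; your opening equivalence ``$j\in\Xi_i \Leftrightarrow \rho(\theta_{{\rm B},i},r_{{\rm B},i},\theta_{{\rm B},j},r_{{\rm B},j})\ge 10^{-\varphi/20}$'' is not the paper's definition but the exact level set that the rectangle only approximates. Taking that level set as ``the operative definition'' and invoking the symmetry of $\rho$ makes the claim trivial, but it is not the proposition to be proved; the whole content of the result is that reciprocity survives the rectangular approximation. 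Your attempted bridge back to the rectangles is where the gap lies: the claim that two users lying in each other's domains are ``necessarily at comparable ranges, $r_{{\rm B},i}\approx r_{{\rm B},j}$, the beam depth being small relative to the operating distances'' is false in the near field --- by \eqref{Eq:BD}, $\text{BD}_{{\rm \varphi dB},k}$ diverges as $r_{{\rm B},k}\to r_{\rm BD}$ (and is infinite beyond it), so mutually interfering users can sit at very different ranges with very different box sizes, and an ``error of the same order'' argument proves nothing, since the statement is exact, not asymptotic.

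The point you miss is that reciprocity for the actual rectangles holds \emph{exactly}, by the structure the paper exploits in Appendix \ref{App5}: (i) in the width dimension, $\text{BW}_{{\rm \varphi dB},k}\propto r_{{\rm B},k}$ is monotonically increasing in distance, so with $r_{{\rm B},i}\le r_{{\rm B},j}$, containment of the farther user in the nearer user's (narrower) strip immediately implies containment of the nearer user in the farther user's (wider) strip --- the asymmetry of the boxes works in the needed direction rather than being an error to be absorbed; (ii) in the depth dimension, at a common spatial angle both users share the same $r_{\rm BD}$, and the boundaries $r_{{\rm L},k},r_{{\rm R},k}$ satisfy the exact algebraic identity $r_{{\rm B},j}\le \frac{r_{{\rm B},i}r_{\rm BD}}{r_{\rm BD}-r_{{\rm B},i}} \Longleftrightarrow r_{{\rm B},i}\ge \frac{r_{{\rm B},j}r_{\rm BD}}{r_{\rm BD}+r_{{\rm B},j}}$, i.e.\ exactly \eqref{Eq:huxiang1}, which is what delivers reciprocity without any appeal to approximate equality of the domains. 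So while your symmetry-of-$\rho$ observation captures the intuition behind the result, as written the proposal either changes the definition or rests on a non-rigorous (and in general false) comparability claim, and it would need to be replaced by the monotonicity-plus-boundary-algebra argument to prove the stated proposition.
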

\begin{proof}
	Please refer to Appendix \ref{App5}.
\end{proof}

\begin{remark}[How do Propositions \ref{Le:BR} and \ref{Pro:two_set} facilitate the interference set construction?]\label{Re:xx}
	\emph{Proposition \ref{Le:BR} derives a distance threshold, beyond which the distinct beam-depth intervals tend to be overlapped heavily. Furthermore, by means of the results of Proposition \ref{Pro:two_set}, we can conclude that users whose distance is greater than the threshold will  definitely 
	interfere with each other, and therefore should be
	grouped into one interference set. The above conclusions have two benefits in promoting the interference set construction. First, the complexity of the interference set construction can be greatly reduced with a large portion of users being located beyond the distance threshold. While at the same time, the \emph{interference reciprocity}
	 makes it possible to find users contained in overlapping interference sets sequentially, thus avoiding repeated checks. More importantly, these two conclusions are naturally complementary, making the low-complexity approach concise
	 and intuitive.}
\end{remark}

Subsequently, since the interference set construction is largely affected by the user distribution, for ease of exposition, we divide it into two cases, as described below. 

\underline{\textbf{Linear distribution:}}
We commence with the challenging case, i.e., {all users lie in the same spatial angle}, to demonstrate the construction process of the interference set. Based on Proposition \ref{Le:BR}, taking $r_{\rm DT}$ as the boundary, the whole user set can be divided into two separate subsets, denoted as $\mathcal{K}_{\rm small}$ and $\mathcal{K}_{\rm large}$, which are given by
\begin{align}
	\mathcal{K}_{\rm small} &= \left\lbrace {i}|r_{i} < r_{\rm DT},  i \in \mathcal{K}\right\rbrace,\nn\\
	\mathcal{K}_{\rm large} &=\left\lbrace {i}|r_{i} \ge r_{\rm DT}, i \in \mathcal{K}\right\rbrace.
\end{align}
Then, according to Remark \ref{Re:xx}, all users in $\mathcal{K}_{\rm large}$ are interfering with each other, thereby being grouped into one interference set, i.e.,
 	\begin{equation}\label{Eq:allone}
	\Xi_{{\rm I}} \leftarrow \mathcal{K}_{\rm large}.
 \vspace{-3pt}
\end{equation}
 While for the remaining legitimate users whose distance is smaller than $r_{\rm DT}$, say user  $k \in \mathcal{K}_{\rm small}$, its interference set can be mathematically given as 
\begin{align}\label{Eq:part}
	\Xi_{k} = \left\lbrace {i}|r_{i}  \in  \text{BD}_{{\rm \varphi dB},k} , r_i \ge r_{k} \in \mathcal{K}_{\rm small}\right\rbrace.
\end{align}

\underline{\textbf{Uniform distribution:}} While for the general scenario where users are uniformly distributed within the near-field region, the interference set construction follows a similar idea as that of the linear distribution. More specifically, for a legitimate user $k \in \mathcal{K}_{\rm small}$, an additional step is to first find out which users are in its beam width, followed by the distance-domain identification as in \eqref{Eq:allone} and \eqref{Eq:part}. It is important to highlight that the proposed beam determination is independent of the user distribution once the interference sets are constructed. Accordingly, to provide a clearer illustration in determining the beamformer, we primarily focus on the case of linear distribution in the following. Notably, the beamformer determination procedures can be directly applied to the uniform distribution case.

\subsubsection{Beamformer Determination}
The two key factors that affect the system secrecy performance, namely inter-user interference and information leakage, can be measured by the cardinality of the interference sets and the minimum required power for security in Lemma \ref{Pro:sec_condi}, respectively. Ideally, for data transmission of a typical legitimate user, if the secure condition is met and its interference set is empty, it indicates that this user is secure and interference-limited. Under such circumstances, the MRT-based beamforming can be performed directly, i.e., making the beamformer aligned with its channel. Otherwise, the beamformer needs to be optimized using the conventional approach in Section \ref{Subsec:con}. To this end, the procedures of the beamformer determination can be divided into the following three cases:

\textbf{Case 1 (No overlapping):} In this case, the interference sets of all legitimate users and $\Xi_{\rm I}$ are empty, indicating limited interference among users. As such, the beamformer determination of each legitimate user depends on the secure condition solely. In particular, the beamformers of legitimate users who require power (i.e., the corresponding $P_{{\rm E,min}}>0$) for security provisioning need to be optimized, and those of the remaining users shall employ MRT-based beamforming.
\begin{figure}[t]
	\centering
	\includegraphics[width=0.37\textwidth]{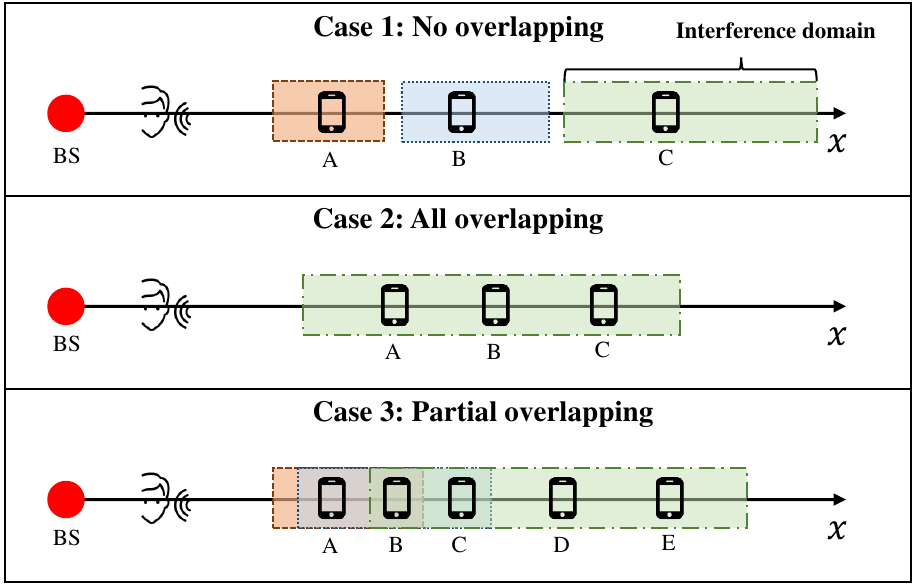}
	\caption{Three cases of beamformer determination.} \label{Fig:threecases}
 \vspace{-12pt}
\end{figure}

\textbf{Case 2 (Full overlapping):} For this case, all legitimate users are contained in an interference set, resulting in strong inter-user interference. Accordingly, we select one legitimate user with the minimum required power for security provisioning to conduct beamformer optimization. The remaining users adopt MRT-based beamforming directly. Particularly, it is worth noting that in practice, Cases 1 and 2 rarely occur, while the following Case 3 is relatively common.

\textbf{Case 3 (Partial overlapping):} When there are multiple non-empty interference sets, the beamformer determination becomes much more involved. The key challenge in the case of partial overlapping is identifying where the overlapping is and how many users are included in the overlapping area. To conquer this issue, we propose a \emph{three-step identification framework} for determining the beamformer optimization. Specifically, Fig. \ref{Fig:threecases} shows an illustrative example of multiple non-empty interference sets, i.e., $\Xi_{A}=\{B\}$, $\Xi_{B}=\{C\}$, and $\Xi_{C}=\{D, E\}$, to demonstrate the detailed procedures. 
    \begin{itemize}
        \item[1.] \emph{Identify the interference set of user $A$:} If it is empty, beamformer optimization is determined by the secure condition. Otherwise, identify the cardinality of the interference set of user $A$ and find the last user in the interference set $\Xi_{A}$, i.e., $B$. 
        \item[2.] \emph{Identify the interference set of user $B$:} Since the interference set of user $B$ is not empty, user $B$'s beamformer should be aligned. Meanwhile, user $A$'s beamformer optimization is determined by the secure condition. 
        \item[3.] \emph{Identify the interference set of user $C$:} The interference set of user $C$ is also not empty. Thus, user $C$'s beamformer should be aligned with its channel. Last, identify the interference set of the last user in $\Xi_{C}$, i.e., $E$, which is empty. Then, one legitimate user included in the interference set of $C$, i.e., $D$ or $E$,  is selected according to the secure condition for beamformer optimization.
    \end{itemize}

\begin{remark}[How to design the beamformer of AN?]\label{Re:AN_Align}\emph{
		In particular,
		it has been shown in Example \ref{Ex:AN} that by leveraging the beam focusing effect, AN can be deliberately adjusted to be focused on the eavesdropper to effectively suppress leaked signal power. At the same time, AN causes limited interference to legitimate users due to the worse channel conditions. Therefore, we can infer that for the design of near-field AN beamformer, it should be properly pointed towards the eavesdropper to achieve near-optimal secrecy performance. The corresponding performance will be evaluated via simulations in Section \ref{Sec:SR}. }
\end{remark}
\begin{figure}[t]
	\centering	\includegraphics[width=0.38\textwidth]{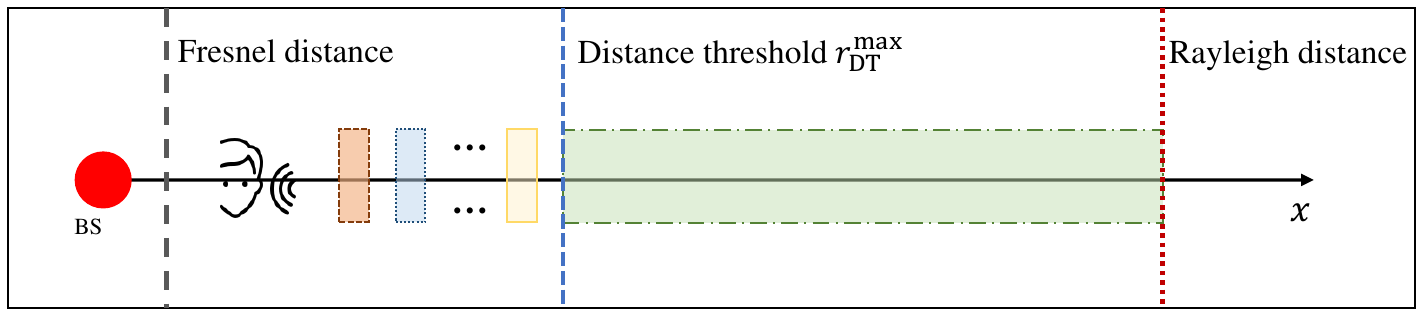}
	\caption{Extreme conditions.} \label{Fig:complex}
 \vspace{-12pt}
\end{figure}
\vspace{-12pt}
\subsection{Complexity Analysis}\label{Sec:Complexity}
Next, we compare the computational complexity of the proposed approach with that of the conventional method in Section \ref{Subsec:con}.
To be more specific, consider the worst-case computational complexity of our approach, assuming that the following extreme conditions are met.
 First, consider the scenario where users are located at the same spatial angle, and all users are distinguished by the defined boundary in Proposition \ref{Le:BR}, as shown in Fig. \ref{Fig:complex}. Furthermore, it is assumed that users located on the left side of the boundary are sufficiently separable, i.e., no overlapping interference sets. Finally, we adopt the farthest boundary in \eqref{Eq:maxBR} such that the \emph{separable region} is large enough. In this way, the computational complexity is determined by the geometric distribution of legitimate users, which is derived as follows.
 \begin{proposition}\label{Pro:far}
 \emph{The distribution of the number of legitimate users that require beamformer optimization can be characterized as
\begin{align}
    &P(K=m)=	\\
&\begin{cases}
		{K \choose m-1}( P_{\rm sr}) ^{m-1} \nn
 		( 1-P_{\rm sr})^{K-m+1}, & m=1,\cdots,K-1,\\
		{K \choose K-1}( P_{\rm sr}) ^{K-1} \nn
 		( 1-P_{\rm sr})+(P_{\rm sr})^{K},  &m=K.
	\end{cases}
\end{align}
Accordingly, the expectation of the number of legitimate users that the beamformer needs to be optimized is given by
 \begin{equation}
K_{\rm req} = 1+KP_{\rm sr}-(P_{\rm sr})^{K},
 \end{equation}
 where $P_{\rm sr}=\frac{r^{\rm max}_{\rm DT}-r_{\rm Fre}}{r_{\rm Ray}-r_{\rm Fre}}$ denotes the probability that the user is positioned in the separable region.}
 \end{proposition}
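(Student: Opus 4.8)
The plan is to model the placement of the $K$ legitimate users as independent and identically distributed draws along the common spatial angle, uniform over the near-field distance interval $[r_{\rm Fre}, r_{\rm Ray}]$, and then to count how many of them must have their beamformers optimized under the stated extreme conditions. First I would recall the combinatorial structure forced by the extreme setup: users to the \emph{left} of the farthest boundary $r^{\max}_{\rm DT}$ (i.e., in the \emph{separable region}) have pairwise-disjoint interference sets, so by the beamformer-determination rule of Case~1/Case~3 each such user's beamformer is decided \emph{solely} by its own secure condition, whereas all users to the \emph{right} of the boundary fall into the single overlapping set $\Xi_{\rm I}$ (by Proposition~5 and Remark~4) and contribute exactly \emph{one} user requiring optimization. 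Hence, if $m-1$ users land in the separable region and the remaining $K-(m-1)\ge 1$ users land in the non-separable region, the total number requiring optimization is $(m-1)+1=m$; if all $K$ users land in the separable region, the count is $K$. This is precisely the case split in the claimed formula.

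Next I would compute the probabilities. Since each user independently falls in the separable region with probability
\[
P_{\rm sr}=\frac{r^{\max}_{\rm DT}-r_{\rm Fre}}{r_{\rm Ray}-r_{\rm Fre}},
\]
the number of users in the separable region is $\mathrm{Binomial}(K,P_{\rm sr})$. For $m=1,\dots,K-1$, the event ``exactly $m$ users require optimization'' is the event ``exactly $m-1$ users are in the separable region'' (with at least one user outside, which is automatic since $m-1<K$), giving $\binom{K}{m-1}P_{\rm sr}^{\,m-1}(1-P_{\rm sr})^{K-m+1}$. For $m=K$, two sub-events contribute: exactly $K-1$ users in the separable region and one outside, contributing the single outside user, $\binom{K}{K-1}P_{\rm sr}^{\,K-1}(1-P_{\rm sr})$; or all $K$ users in the separable region, contributing $P_{\rm sr}^{\,K}$. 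Summing these yields the stated piecewise expression for $P(K=m)$.

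Finally, for the expectation I would write $K_{\rm req}=\sum_{m=1}^{K} m\,P(K=m)$ and evaluate it. The clean way is to use the decomposition ``$K_{\rm req}=(\text{number of users in the separable region})+\mathbf{1}\{\text{at least one user outside}\}$'': the first term has expectation $K P_{\rm sr}$ by linearity, and the indicator has expectation $1-P_{\rm sr}^{\,K}$, so $K_{\rm req}=KP_{\rm sr}+1-P_{\rm sr}^{\,K}$, matching the claim. Alternatively one can verify this directly from the piecewise pmf by a binomial-moment computation, but the indicator decomposition avoids the algebra. The main obstacle — really a modelling subtlety rather than a hard calculation — is justifying rigorously that under the extreme conditions every separable-region user contributes an \emph{independent} $\{0,1\}$ demand governed only by its own secure condition while the entire non-separable block collapses to exactly one demand; this requires invoking the interference-reciprocity of Proposition~5 (so the right-hand block is one interference set), the emptiness of overlaps on the left (so the three-step framework reduces to per-user secure-condition checks), and the fact stated in Remark~2 following Lemma~2 that the secure condition is an intrinsic boundary independent of power allocation, so the ``requires optimization'' status of a separable user is well defined. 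Once that structural reduction is in place, the probabilistic counting is routine.
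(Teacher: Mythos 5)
Your proof is correct: the paper states Proposition \ref{Pro:far} without an explicit proof, and your derivation --- modelling the $K$ users as i.i.d.\ uniform over $[r_{\rm Fre}, r_{\rm Ray}]$ along the common angle, charging each separable-region user one optimization and the entire block beyond $r^{\max}_{\rm DT}$ exactly one (via Propositions \ref{Le:BR} and \ref{Pro:two_set} and the Case-2 rule), then reading off the binomial pmf and computing the mean through the decomposition $K_{\rm req}=X+\mathbf{1}\{X<K\}$ with $X\sim\mathrm{Binomial}(K,P_{\rm sr})$ --- is precisely the counting that the stated formulas encode. The one point to state cleanly is that in the worst-case accounting the paper intends, every separable-region user is charged an optimization regardless of its individual secure condition, so your parenthetical about a per-user $\{0,1\}$ demand should be resolved to the all-ones (worst-case) assignment; with that fixed, the pmf and the expectation $1+KP_{\rm sr}-(P_{\rm sr})^{K}$ follow exactly as you argue.
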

As such, the worst-case complexity of the proposed approach can be characterized as $\mathcal{O}(	K_{\rm req} N^{6.5})$ \cite{2010interior}. While for the conventional approach, it entails a computational complexity of $\mathcal{O}((K+1)N^{6.5})$. For instance, when there are $K=20$ uniformly distributed legitimate users along one specific angle, the average number for beamformer optimization is $K_{\rm req}=2.2691$. This indicates that the computational complexity of the proposed approach is about at most $11.3\%$ of that of the conventional approach, rendering its great advantages in practical implementation.

\vspace{-6pt}
\section{Simulation Results}\label{Sec:SR}
In this section, we present numerical results to demonstrate the performance gains of the near-field PLS over its far-field counterpart, as well as the effectiveness of the
proposed low-complexity beamforming scheme. The system parameters are set as follows. The BS equipped with $N=256$ antennas operates at a frequency of $f=100$ GHz to serve $K=3$ legitimate users. We consider the most representative and challenging scenario where the eavesdropper is positioned at $(0,0.05r_{\rm Ray})$, and three legitimate users are located at $(0,0.06r_{\rm Ray})$, $(0,0.1r_{\rm Ray})$, and $(0,0.3r_{\rm Ray})$. Unless specified otherwise, the system parameters are described below. The reference path loss is $\beta=(\lambda/4\pi)=-72$ dB, the maximum transmit power is  $P=30$ dBm and the noise power is set as $\sigma^2_{{\rm B},k}=\sigma^2_{\rm{E}} = -80$ dBm, $\forall k \in \mathcal{K}$, and $\varphi=3$. Moreover, the number of NLoS paths from the BS to legitimate user/eavesdropper is set to $L_{{\rm B},k}=2, \forall k\in \mathcal{K}$ and $L_{\rm E}=2$. The channel gains $h_{{\rm B},k,\ell}$ and $h_{{\rm E},\ell}$ are modeled as $h_{{\rm B},k,\ell}\sim \mathcal{CN}(0,\eta^2_{{\rm B},k,\ell})$ and $h_{{\rm E},\ell}\sim \mathcal{CN}(0,\eta^2_{{\rm E},\ell})$, respectively, where $\eta_{{\rm B},k,\ell}=\kappa h_{{\rm B},k}$, $\eta_{{\rm E},\ell}=\kappa h_{{\rm E}}$, and $\kappa = -15$ dB \cite{wang2023beamfocusing}.
Specifically, we consider the following benchmark schemes for performance comparison.
\begin{itemize}
\item  \emph{Conventional scheme:} It performs the conventional approach to obtain the joint beamforming design with AN, thus serving as an upper bound for performance comparison.
\item \emph{Baseline scheme 1:} The joint transmit beamforming with AN is configured to point towards their respective users, while only power allocations are optimized.
\item  \emph{Baseline scheme 2:} for which the transmit beamformer is optimized while the AN  is directly aligned with the eavesdropper's channel according to Remark \ref{Re:AN_Align}.
\item \emph{Baseline scheme 3:} for which the transmit beamformers of all legitimate users are aligned with the associated channels while optimizing the AN only.
\end{itemize}
\vspace{-9pt}
\subsection{Effect of Maximum Transmit Power}
In Fig. \ref{Fig:rate_power}, we study the effect of maximum transmit power on the achievable secrecy rate by different schemes. First, we observe that all involved schemes can guarantee secure transmission in the considered scenario, demonstrating the superiority of beam focusing in PLS improvement compared to the far-field counterpart. Besides, 
the secrecy rates by all schemes monotonically increase with the transmit power. Second, it is observed that baseline scheme 1 and the proposed approach achieve a very close secrecy rate to that of the conventional method, which is consistent with Remark \ref{Re:AN_Align}. It is worth noting that the conventional scheme is with prohibitively high computational complexity, whereas the proposed method achieves significantly lower computational complexity, rending its substantially higher efficiency.
Next, the proposed scheme attains substantial performance gain over other benchmark methods, e.g., around twice the secrecy rate by the baseline scheme 3. Last, it is observed that the secrecy rate of baseline scheme 3 tends to be saturated in the high transmit power regime. This is because the adopted MRT beamforming cannot exploit the spatial-domain DoFs, hence being unable to mitigate the multi-user interference effectively.

\vspace{-6pt}
\subsection{Effect of Number of Legitimate Users}
In Fig. \ref{Fig:rate_num_bob}, we evaluate the secrecy rate of the proposed scheme under various numbers of legitimate users. Specifically, $K$ legitimate users are sequentially selected from the distance interval $[0.25r_{\rm Ray},0.35r_{\rm Ray}]$ in descending order, while all of them are positioned at the same angle $\theta=0$. First, we see that the secrecy rates of all schemes increase with the number of legitimate users. This is because the newly added legitimate user with more favorable channel conditions is selected for data transmission, resulting in a higher secrecy rate. Second, it is observed that under different numbers of legitimate users, the proposed scheme is significantly superior to benchmark schemes 1 and 3, with a very marginal performance loss compared to the conventional method.
\vspace{-9pt}
\subsection{Effect of Spatial Angle}
Next, we show the effect of the spatial angle on the secrecy rate by different schemes in Fig. \ref{Fig:rate_angle}. For clarity, we consider the scenario with one eavesdropper located at $(0,0.05r_{\rm Ray})$ and two legitimate users. One of them is situated at $(0,0.3r_{\rm Ray})$, while the other one is placed at a distance of $0.3r_{\rm Ray}$, with the spatial angle varying between $[-0.2,0.2]$. First, it is observed that the secrecy rates by all schemes experience severe performance degradation when the two legitimate users overlap with each other. This is because in this case, the inter-user interference is the most intense, and the information leakage is the strongest, thus resulting in the lowest secrecy rate. Moreover, an intriguing observation is that the secrecy performance of baseline schemes 1 and 3 experiences fluctuations across the entire spatial angle interval, In contrast, the proposed scheme only exhibits fluctuations within the large-angle regime, e.g., $|\theta| \ge 0.1$, but with a moderate amplitude. This is because baseline schemes 1 and 3 adopt the MRT-based beamforming without utilizing the abundant spatial DoFs, rendering them ineffective in mitigating the inter-user interference. While for the proposed scheme, the MRT beamforming is employed only when the angular separation between two legitimate users exceeds the beam width and the user with a changing angle is secure.
\begin{figure}[t!]
	\centering
	\includegraphics[width=0.385\textwidth]{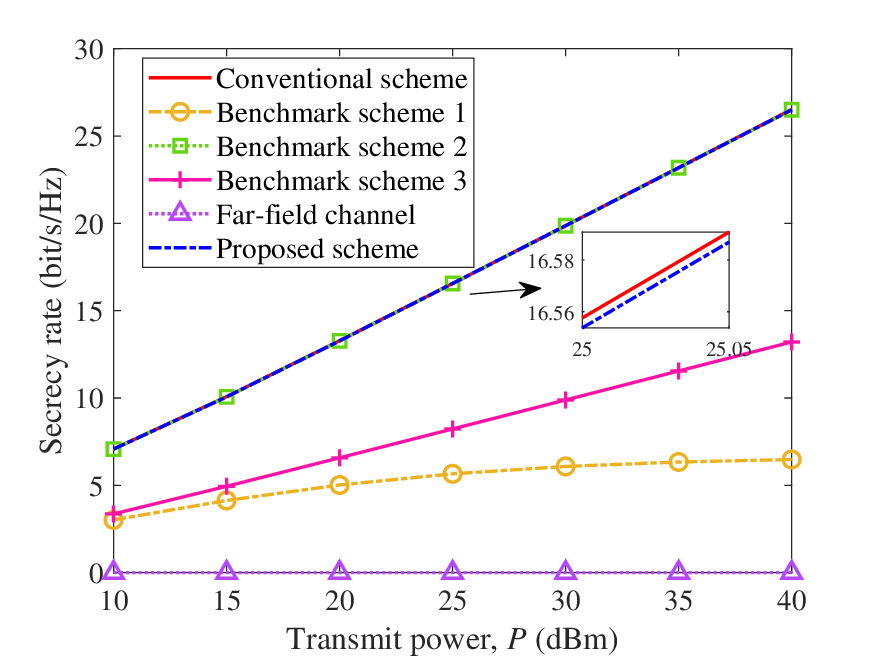}
	\caption{Effect of BS transmit power with $K=3$, $P=30$ dBm.} \label{Fig:rate_power}
    \vspace{-9pt}
\end{figure}
\begin{figure}[t!]
	\centering
	\includegraphics[width=0.385\textwidth]{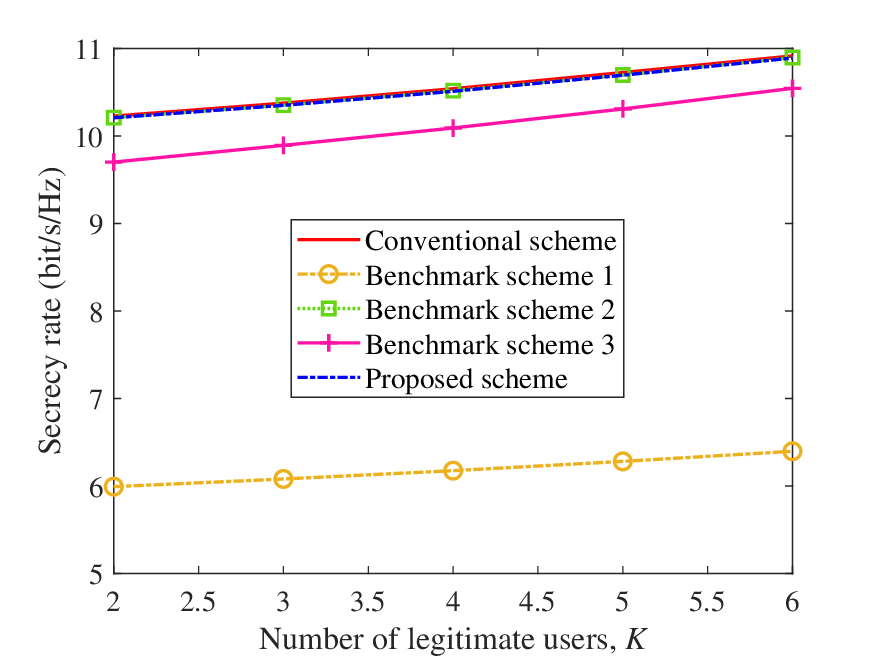}
	\caption{Effect of number of legitimate users with $P=30$ dBm.} \label{Fig:rate_num_bob}
  \vspace{-9pt}
\end{figure}
\begin{figure}[t!]
	\centering	\includegraphics[width=0.385\textwidth]{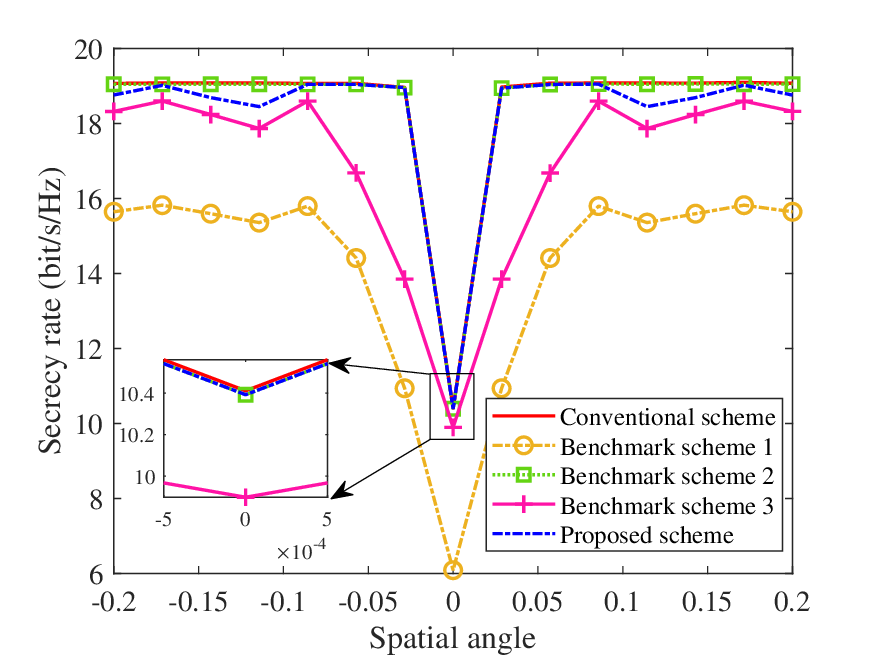}
	\caption{Effect of user spatial angle with $K=2$, $P=30$ dBm.} \label{Fig:rate_angle}
   \vspace{-9pt}
\end{figure}
\begin{figure}[t!]
	\centering
\includegraphics[width=0.385\textwidth]{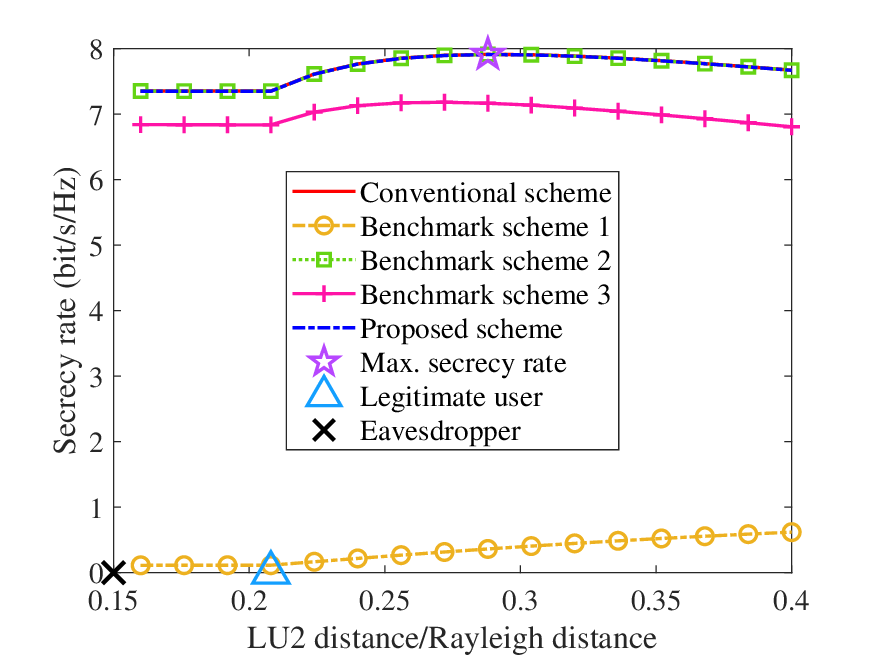}
	\caption{Effect of user distance with $K=2$, $P=30$ dBm.} \label{Fig:rate_dist}
   \vspace{-9pt}
\end{figure}
 \subsection{Effect of Distance}
In Fig. \ref{Fig:rate_dist}, we illustrate the effect of legitimate user distance on the secrecy rate. Particularly, we consider a scenario consisting of one fixed eavesdropper (marked with $\boldsymbol{\times}$), one fixed legitimate user, LU1 (marked with triangle), and one position-changing legitimate user LU2. First, we can observe that the secrecy rates of all schemes remain unchanged when the LU2 distance is smaller than that of LU1, indicating that all power allocated to LU1. However, when the LU2 distance is greater than the LU1 distance, the secrecy rate of benchmark scheme 1 shows a monotonous increase. While the secrecy performance of other schemes tends to increase first and then decrease, reaching a maximum marked with a purple pentagram. This is attributed to the non-linearity coupled effects of the channel gain and the correlation in terms of the LU2 distance, as stated in Proposition \ref{apppro1}. 
\vspace{-6pt}
 \subsection{Complexity Comparison}
Last, we compare the runtime of the proposed method with that of the conventional scheme under different numbers of the legitimate users. Specifically, the simulation is conducted using MATLAB
on a computer equipped with Intel Core i$7$-$1300$ and $3.40$ GHz
processor. It is clearly observed that the runtime of the proposed scheme is much shorter than the conventional method under different values of $K$. Additionally, the runtime reduction achieved by the proposed scheme becomes more significant as $K$ increases. This is because the number of users requiring beamformer optimization tends to saturate in the large-$K$ regime, which aligns with the result in Section \ref{Sec:Complexity}.
\begin{table}[t]
\centering 
\caption{Computational Complexity Comparison}\label{complexity}
\begin{tabular}{c|ccc}
\hline
\hline
\multirow{2}{*}{} & \multicolumn{3}{c}{Runtime (s)} \\ \cline{2-4} & $K=2$  & $K=3$ & $K=4$ \\ \hline
Conventional scheme& $535$ & $2256$ &$4177$\\ \hline
 Proposed scheme& $\mathbf{268}$ & $\mathbf{1020}$ &$\mathbf{1524}$ \\ \hline
\end{tabular}
\vspace{-12pt}
\end{table}

\section{Conclusions}\label{Se:Con}
In this paper, we studied near-field secure communication in a multi-user MISO system, wherein multiple legitimate users and one eavesdropper are assumed to be located in the near-field region. 
To shed useful insights into the design of beam focusing-based near-field PLS, we first considered a special case with one legitimate user and one eavesdropper. In particular, it was unveiled that 1) AN is of great importance to the system security provisioning, making insecure systems secure again; 2) AN can bring pronounced performance gains using a moderate portion of power. Moreover, for the general case, we first introduced a suboptimal solution by invoking the SCA and SDR techniques. Subsequently, an implementation-friendly low-complexity approach tailored for beamformer design was proposed based on the introduced interference domain and the three-step identification framework. Numerical results are presented to illustrate the performance gains achieved by near-field PLS compared to the far-field counterpart, as well as the effectiveness of the proposed low-complexity approach. 
In the future, it is interesting to devise more computationally-efficient near-field beamformers to further improve the near-field PLS.

\begin{appendices}
	\section{Proof of Lemma \ref{Pro:sec_condi}}\label{App1}
 To verify Lemma \ref{Pro:sec_condi}, we first state a fact that all transmit power should be utilized for maximizing the secrecy rate \cite{10054084}.
		Then, the secrecy rate $R^{\rm Sec}$ in \eqref{Eq1:secrecyrate_AN} can be expressed as 
	\begin{equation}\label{Eq:newsecrate}
		R^{\rm Sec} = \log_2\left(\frac{A+\tilde{B}}{A+\tilde{C}} \right) 
	\end{equation}
	where
	\begin{equation}
		\begin{cases}
			\tilde{B}=-P_{\rm E}^2g_{\rm B}g_{\rm E}+P_{\rm E}(Pg_{\rm B}g_{\rm E}-g_{\rm B}\sigma^2)+Pg_{\rm B}\sigma^2, \\ 
			\tilde{C}=-P_{\rm E}^2g_{\rm B}g_{\rm E}\rho^4+P_{\rm E}(Pg_{\rm B}g_{\rm E}\rho^4-g_{\rm E}\rho^2\sigma^2)+Pg_{\rm E}\rho^2\sigma^2.
		\end{cases}
	\end{equation} 
	For ease of analysis, the secrecy rate in \eqref{Eq:newsecrate} can be further rewritten as:
	\begin{equation}\label{Eq:NE}
		R^{\rm Sec} =\log_{2}\left(1+ \frac{\tilde{B}-\tilde{C}}{A+\tilde{C}}\right),
	\end{equation} 
	Next, we only need to 
find the condition, under which $\tilde{B}-\tilde{C}= 0$ holds.  
	Therefore, we define the following function
		\begin{align}\label{Eq:delta}
			\xi = \tilde{B}-\tilde{C}
&=-P_{\rm E}^2g_{\rm B}g_{\rm E}(1-\rho^4)+PP_{\rm E}g_{\rm B}g_{\rm E}(1-\rho^4)\nn\\
&-P_{\rm E}(g_{\rm B}-g_{\rm E}\rho^2)\sigma^2
+P(g_{\rm B}-g_{\rm E}\rho^2)\sigma^2.
		\end{align}
%
%
		It is clearly observed that function $\xi$ in \eqref{Eq:delta} is a quadratic function with respect to $P_{\rm E}$, whose discriminant can be expressed as
		\begin{align}\label{Eq:discri}
			\Delta = \left[ Pg_{\rm B}g_{\rm E}(1-\rho^4)+( g_{\rm B}-g_{\rm E}\rho^2)\sigma^2 \right]^2.
		\end{align}
		Due to the non-negativity of $\Delta$, that is, $\Delta\ge 0$, the zeros of function $\xi$ always exist.
		Then, in order to obtain the zeros analytically, defined as $P^{(1)}_{\rm E}$ and $P^{(2)}_{\rm E}$, the following two cases need to be considered:
		\begin{itemize}
			\item[1)] \emph{Same Zeros:} In this case, by letting $\Delta=0$, we have 
			\begin{equation}
				g_{\rm B} - g_{\rm E}\rho^2 = -\frac{Pg_{\rm B}g_{\rm E}(1-\rho^4)}{\sigma^2}. \vspace{-3pt}
			\end{equation}
			Accordingly, $\xi$ has two identical zeros, which are given by
			\begin{equation}
				P_{\rm E}^{(1)}=P_{\rm E}^{(2)}=P.
            \vspace{-3pt}
			\end{equation}
			\item[2)]  \emph{Different Zeros:}  
			For this case, we have $\Delta >0$, then the two distinct zeros of function $\xi$ are given by
			\begin{equation}
				\begin{cases}
					P_{\rm E}^{(1)}=\frac{(g_{\rm E}\rho^2-g_{\rm B})\sigma^2}{g_{\rm B}g_{\rm E}(1-\rho^4)},\\
					P_{\rm E}^{(2)}=P.
				\end{cases}
			\end{equation}
			Since $P_{\rm E}^{(1)}$ is always smaller than $P_{\rm E}^{(2)}$, we only consider the case $P_{\rm E}^{(1)}<P_{\rm E}^{(2)}$, i.e., $	g_{\rm B} - g_{\rm E}\rho^2 > -\frac{Pg_{\rm B}g_{\rm E}(1-\rho^4)}{\sigma^2}$.  
			Notice that in this case, $P_{\rm E}^{(1)}$ is either greater than zero or less than zero. Accordingly, when $P_{\rm E}^{(1)}<0$, we can obtain $g_{\rm B} - g_{\rm E}\rho^2 \ge 0$, hence indicating that the system security can be guaranteed at will. This fact is consistent with Proposition \ref{Pro:secure_reg}. Next, for the case $P_{\rm E}^{(1)}>0$, i.e., $g_{\rm B} - g_{\rm E}\rho^2 < 0$,
			we can infer that the value of $P_{\rm E}^{(1)}$ is between zero and $P_{\rm E}^{(2)}$, thus unveiling the minimum power required to achieve system security.
			
		\end{itemize}	
		Combining the above, we can derive the secure condition, i.e., $		g_{\rm B}-g_{\rm E}\rho^2 \ge -\frac{Pg_{\rm B}g_{\rm E}(1-\rho^4)}{\sigma^2}$, which thus completes the proof.	
		%
		%

\vspace{-4pt}
    \section{Proof of Lemma \ref{Le:opt_pow}}\label{App2}
    First, we rewrite the secrecy rate as $R^{\rm Sec}=\log_{2}(\zeta)$, where $\zeta = \frac{A+\tilde{B}}{A+\tilde{C}}$. 
Then, obtaining the maximum secrecy rate is equivalent to finding the maximum value of function $\zeta$. To this end, we first obtain the first-order derivative of $\zeta$, which is given by $\frac{\partial \zeta}{\partial P_{\rm E}}$.
Notice that the monotonicity of the first derivative function above relies only on its numerator, so we study its numerator instead, which is given by,
	\begin{align}\label{Eq:numer_fir_der}
		f(P_{\rm E})
		\triangleq\Upsilon_1 P_{\rm E}^2+\Upsilon_2P_{\rm E}+\Upsilon_3,
	\end{align}
First, it is shown that function $f(P_{\rm E})$ is a quadratic function with respect to $P_{\rm E}$.
Then, it can be easily verified that the correlation $\rho$ is smaller than one, i.e., $0<\rho\le1$, based on which
we can easily obtain that $\Upsilon_1 $ and $\Upsilon_2 $ are always smaller than zero for various system settings, i.e. $\Upsilon_1<0$ and $\Upsilon_2<0$. Accordingly, the property of function $f(P_{\rm E})$ is determined by the sign of $\Upsilon_3$, where both the first and second terms appear to be positive under the settings considered. While for the last term, its sign can be divided into two cases. First, when $\Upsilon_3>0$, i.e., $g_{\rm B}-g_{\rm E}\rho^2< \frac{P^2 g_{\rm B}g_{\rm E}(g_{\rm E}-g_{\rm B}\rho^2)\rho^2+P(g_{\rm E}^2-g_{\rm B}^2)\rho\sigma^2}{\sigma^4}$, then it is shown that the function $f(P_{\rm E})$ is first greater than zero and then smaller than zero, thereby resulting in the secrecy rate first increasing and then decreasing. Therefore, the maximum secrecy rate does exist and is obtained at the zero of function $f(P_{\rm E})$, which is given by 
\begin{equation}
		P_{\rm E}= -\frac{\Upsilon_2}{{2\Upsilon_1}}-\frac{\sqrt{{\Upsilon_2^2}-4{\Upsilon_1}\Upsilon_3}}{2\Upsilon_1}.
\end{equation}
However, on the other hand, when $\Upsilon_3\le 0$ is satisfied, the function $f(P_{\rm E})$ is always smaller than zero, resulting in a monotonically decreasing secrecy rate.

\section{Proof of Proposition \ref{Le:BR}}\label{App4}
	To prove Proposition \ref{Le:BR}, We only need to find the distance whose right boundary of the beam-depth interval is equal to the Rayleigh distance. As such, we have
		\begin{equation}\label{Eq:huxiang}
	\frac{r_{\rm DT} r_{\rm BD}}{r_{\rm BD}-r_{\rm DT}}=r_{\rm Ray} 	\Longrightarrow 	r_{\rm DT} = \frac{N^2\lambda(1-\theta_{{\rm B},k}^2)}{8\beta_{\varphi}^2+2(1-\theta_{{\rm B},k}^2)}.
	\end{equation}
	Moreover, it can be easily verified that the maximum value is obtained at $\theta_{{\rm B},k}=0$.
Combining the above leads to the desired result.
\vspace{-9pt}
\section{Proof of Proposition \ref{Pro:two_set}}\label{App5}
For ease of exposition, we assume $r_{{\rm B}, i} \le r_{{\rm B}, j}$, and then we prove Proposition \ref{Pro:two_set} from both aspects of beam width and beam depth. First,
	from the beam-width perspective, if user $i$ is located within the beam width of user $j$, user $j$ is definitely located within the beam width of user $i$ as well. This is because the beam width is monotonically increasing with the distance, thus resulting in $\text{BW}_{{\rm \varphi dB},j}\ge\text{BW}_{{\rm \varphi dB},i}$. Next, from the beam-depth perspective, assume users $i$ and $j$ are located at the same angle, i.e., $\theta_{{\rm B}, i} = \theta_{{\rm B}, j}$. 
	In this case, if user $j$ locates within the beam depth of user $i$, i.e., $r_{{\rm L},i}\le r_{{\rm B},j}\le r_{{\rm R},i}$, we have 
	\begin{equation}\label{Eq:huxiang1}
		r_{{\rm B}, j} \le \frac{r_{{\rm B}, i} r_{\rm BD}}{r_{\rm BD}-r_{{\rm B}, i}} 	\Longrightarrow r_{{\rm B}, i} \ge \frac{r_{{\rm B}, j} r_{\rm BD}}{r_{\rm BD}+r_{{\rm B}, j}}.
	\end{equation}
	Notice that the right hand of \eqref{Eq:huxiang1} is exactly equal to $r_{{\rm L},j}$. Therefore, user $i$ is located within the beam depth of user $j$ as well, thus completing the proof.
\end{appendices}
\bibliographystyle{IEEEtran}
\bibliography{Ref_NF_PLS}

\end{document}